\documentclass[journal]{IEEEtran}

\usepackage{subcaption}
\usepackage{amsmath,amsfonts}
\usepackage{algorithm}
\usepackage{algpseudocode}
\usepackage{array}
\usepackage{textcomp}
\usepackage{stfloats}
\usepackage{url}
\usepackage{verbatim}
\usepackage{graphicx}
\usepackage{xcolor}
\usepackage{booktabs}
\usepackage{amsthm}

\newtheorem{theorem}{Theorem}
\newtheorem{myDef}{Definition}

\newtheorem{corollary}{Corollary}
\newtheorem{lemma}{Lemma}
\usepackage{tabularx}
\usepackage{amsthm} 
\usepackage{mathtools}

\ifCLASSINFOpdf

\else

\fi

\begin{document}

\title{PAC-DP: Personalized Adaptive Clipping for Differentially Private Federated Learning}

\author{ 
Hao~Zhou,
Siqi~Cai,
Hua~Dai*,
Geng~Yang,
Jing~Luo,
Hui~Cai
\thanks{
H. Zhou is with the school of computer science, Nanjing University of Post and Telecommunication,
the State Key Laboratory of Tibetan Intelligence,
and the State Key Laboratory for Novel Software Technology,Nanjing University, Nanjing, P.R. China.
E-mail: haozhou@njupt.edu.cn.
}
\thanks{
S. Cai, H. Dai, G. Yang, J. Luo, H. Cai are are with the school of computer science, Nanjing University of Post and Telecommunication, Nanjing 210023, China.
E-mail: \{1024041101, daihua, yangg, 1024041121, carolinecai\}@njupt.edu.cn.
}
\thanks{
This work is supported by the National Natural Science Foundation of China (62372244, 62572253), 
Jiangsu Provincial Natural Science Foundation (Youth Fund) (BK20250683),
Natural Science Foundation of Nanjing University of Posts and Telecommunications(NY224058, NY225124),
Young Elite Scientists Sponsorship Program by CAST (JSTJ-2025-641),
Basic Science (Natural Science) Research Project of Higher Education Institutions in Jiangsu Province (25KJB520040),
Open Project of State Key Laboratory for Novel Software Technology (KFKT2025B68),
and China Postdoctoral Special Grant Foundation (2024T170432).
}
}

\maketitle

\begin{abstract}
Differential privacy (DP) is crucial for safeguarding sensitive client information in federated learning (FL), yet traditional DP-FL methods rely predominantly on fixed gradient clipping thresholds. 
Such static clipping neglects significant client heterogeneity and varying privacy sensitivities, which may lead to an unfavorable privacy--utility trade-off.
In this paper, we propose PAC-DP, a Personalized Adaptive Clipping framework for federated learning under \emph{record-level local differential privacy}. 
PAC-DP introduces a Simulation-CurveFitting approach leveraging a server-hosted public proxy dataset to learn an effective mapping between personalized privacy budgets $\varepsilon$ and gradient clipping thresholds $C$, which is then deployed online with a lightweight round-wise schedule.
This design enables budget-conditioned threshold selection while avoiding data-dependent tuning during training.
We provide theoretical analyses establishing convergence guarantees under the per-example clipping and Gaussian perturbation mechanism and a reproducible privacy accounting procedure.
Extensive evaluations on multiple FL benchmarks show that PAC-DP surpasses conventional fixed-threshold approaches under matched privacy budgets, improving accuracy by up to 26\% and accelerating convergence by up to 45.5\% in our evaluated settings.
\end{abstract}

\IEEEpeerreviewmaketitle

\section{Introduction}

The proliferation of mobile and IoT devices in the big data era has led to the generation of vast volumes of decentralized and heterogeneous data \cite{DBLP:journals/tkde/FanGCCCCFGGLLORSSTTTWWX25, chen2023eefl, lee2019deep, li2020review}. Individually, these data sources often lack the volume or variety required for effective model training. At the same time, traditional data collection methods have become infeasible due to privacy regulations, commercial constraints, and growing public concern over data misuse \cite{ma2021communication, 9743558, yao2024ferrari}.

\begin{figure}[!t]  
    \centering  
    \includegraphics[width=\columnwidth, trim=11.5cm 12cm 10cm 5cm, clip]{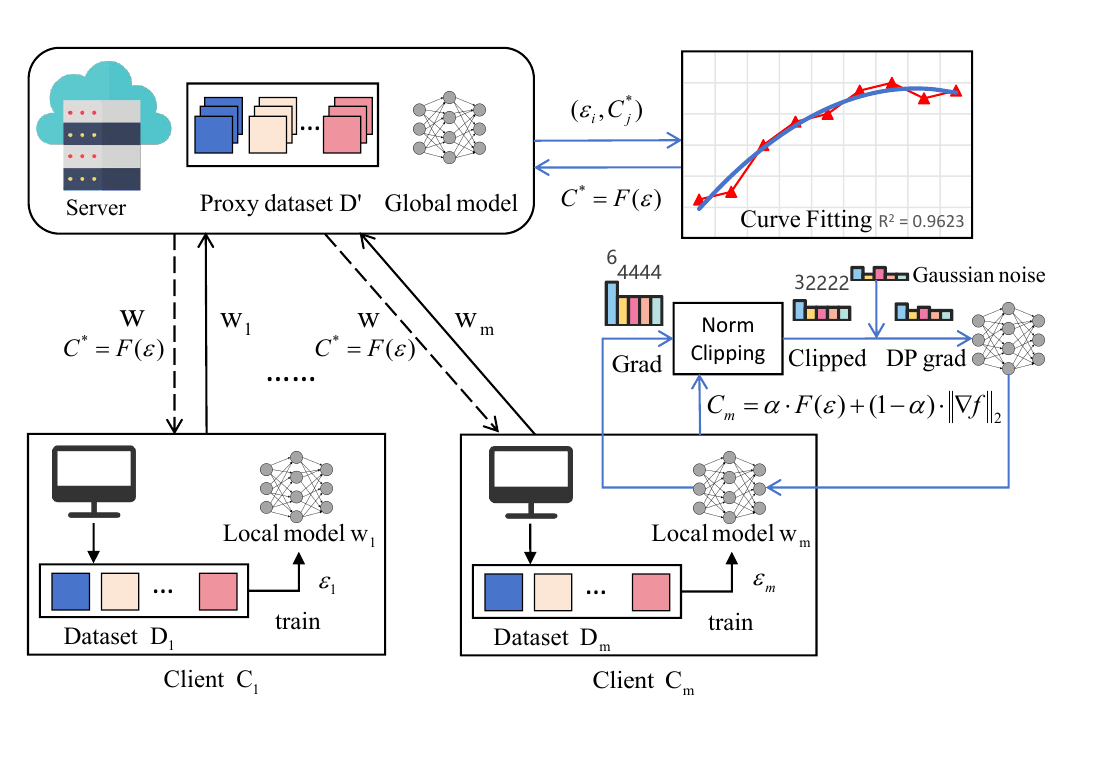}
    \caption{Personalized Adaptive Clipping Mechanism}  
    \label{fig:framework_0} 
\end{figure}

Federated Learning (FL) has emerged as a promising distributed machine learning paradigm that enables collaborative model training without exposing raw user data \cite{DBLP:journals/chinaf/YangXYPYM25, DBLP:conf/aaai/WangXWCQS23}. Clients train models locally and transmit model updates (e.g., gradients) to a central server for aggregation. FL has found successful applications in diverse domains such as healthcare, finance, agriculture, and transportation systems \cite{DBLP:journals/twc/WangXYQ24, mammen2021opportunities, deng2020adaptive, wu2020fedhome}. However, despite avoiding raw data sharing, FL remains susceptible to privacy attacks: model updates can still leak sensitive information \cite{wei2020federated, mcmahan2017learning}, including user participation or partial data reconstruction \cite{wang2022quantized}. Additionally, FL incurs high communication overhead and depends on the stability and trustworthiness of participating devices \cite{konecny2016federated, mcmahan2016communication}.

Federated learning has adopted differential privacy (DP) as its de-facto compliance layer. By perturbing model updates on-device before they ever traverse a network boundary, DP provides formal guarantees that limit information leakage from uploaded messages, which can mitigate membership and reconstruction risks under standard threat models \cite{dwork2013algorithmic, fu2024differentially, fu2024dpsur, xiang2023practical}. Recent advances introduced personalized DP frameworks and mechanisms, allowing heterogeneous clients to specify varying privacy levels while maintaining acceptable utility \cite{yang2024adadp, fu2022adap, deng2020adaptive, collins2021exploiting, shamsian2021personalized, wu2023personalized, zhang2022personalized, ye2023personalized, zhou2024personalized, yang2023personalized, chen2023efficient}. Yet, practical integration of DP into FL systems faces critical challenges, especially regarding gradient clipping for a crucial preprocessing step influencing noise calibration and learning performance.

Most DP-FL methods apply fixed gradient clipping thresholds, neglecting non-IID data distributions, varying gradient magnitudes, and personalized privacy preferences \cite{yang2024adadp, fu2022adap}. Existing adaptive clipping methods typically rely on heuristic strategies (e.g., historical gradient percentiles) lacking theoretical rigor, potentially degrading model performance or even exposing privacy vulnerabilities when gradient distributions change rapidly \cite{fu2022adap, xu2020adadp, wang2022squeezing}.

This gap raises a critical design question: given target privacy loss and evolving non-IID gradient distributions, how can we dynamically determine the optimal clipping bound to maximize utility while respecting diverse client privacy budgets?

To tackle these issues, we introduce \textbf{PAC-DP}, a Personalized Adaptive Clipping framework that enhances differential privacy in federated learning. PAC-DP uniquely employs a Simulation-CurveFitting strategy using a server-side proxy dataset, dynamically establishing mappings between personalized privacy budgets $\varepsilon$ and optimal adaptive gradient clipping thresholds $C^*$.
Departing from existing heuristic-based approaches (e.g., rPDP-FL's percentile clipping \cite{liu2022privacy} or NbAFL's fixed global bound \cite{wei2020federated}) PAC-DP adopts a data-driven optimization strategy by constructing an empirical mapping F($\epsilon$) between user-specified privacy budgets and optimal clipping thresholds through systematic offline simulation on a proxy dataset and assigns budget-conditioned clipping thresholds across heterogeneous clients via the learned mapping $F(\varepsilon)$ and a shared schedule $\lambda(t)$, without using any client-data-dependent threshold selection.

Beyond algorithmic design, PAC-DP also advances the theoretical understanding of personalized DP in federated settings. Unlike existing methods that either lack convergence analysis or rely on fixed clipping, PAC-DP offers rigorous convergence guarantees under \textit{non-convex}, \textit{convex}, and \textit{strongly convex} settings, while explicitly modeling the interaction between clipping thresholds, privacy budgets, and utility degradation.
More importantly, PAC-DP theoretically explains why PAC-DP’s simulation-fitted function $F(\varepsilon)$ achieves better performance than heuristic or fixed-threshold approaches. Such privacy-budget-aware convergence analysis is absent in prior works and offers new insights into privacy-utility trade-offs in DP-FL.

Our primary contributions include:
\begin{itemize}
\item Proposing PAC-DP, an adaptive gradient clipping method enabling personalized differential privacy tailored explicitly for heterogeneous federated learning scenarios.
\item Providing convergence guarantees and a reproducible privacy accounting procedure under record-level local DP with practical client participation.
\item Extensive empirical validation demonstrating PAC-DP’s superior performance compared to fixed-threshold approaches, achieving up to 26\% accuracy improvement, 45.5\% faster convergence, and enhanced privacy-utility trade-offs.
\end{itemize}

\section{Related Work}

\textbf{Differential Privacy in Federated Learning.}
Differential Privacy (DP) has become a foundational technique for safeguarding user data in Federated Learning (FL). It provides formal guarantees that the inclusion or exclusion of a single individual's data does not significantly affect the output of an algorithm, thereby mitigating risks of inference or membership attacks \cite{wang2022quantized, wang2023batch, zhang2024flpurifier, zhang2024badcleaner}. In FL, DP mechanisms are typically applied by injecting noise into model updates or gradients before aggregation, with early methods such as DP-SGD \cite{abadi2016deep} demonstrating the feasibility of privacy-preserving optimization via Gaussian perturbations.
Subsequent research has extended DP to the federated setting in more complex scenarios. For instance, NbAFL \cite{wei2020federated} applies local noise addition at each client before model aggregation, enabling client-level DP and analyzing the trade-off between privacy guarantees and convergence behavior. The Dordis protocol \cite{jiang2024dordis} further optimizes DP-FL execution under unreliable or dropping clients, reducing synchronization overhead and improving protocol robustness. Recent efforts have also integrated DP with backdoor defense strategies to simultaneously mitigate privacy and security threats in FL \cite{zhang2024flpurifier, zhang2024badcleaner}.
A critical challenge in this area is the privacy-utility trade-off, particularly under non-IID data distributions, which are common in real-world FL deployments. Uniform noise addition can disproportionately affect clients with rare or skewed data, leading to substantial utility degradation \cite{wei2020federated}. Techniques such as cross-silo FL with personalized DP \cite{liu2024crosssilo} aim to address this by assigning distinct privacy budgets to each data record or client, enhancing flexibility but complicating privacy accounting and system design.

\textbf{Gradient Clipping in Differentially Private Optimization.}
Unfortunately, the one‑size‑fits‑all clipping rule inherited from DP‑SGD ignores two realities of federated ecosystems: (i) gradient norms vary by task domain (e.g., radiology vs. keyboard prediction) and by client hardware, and (ii) privacy budgets now differ per tenant under multi‑silo FL \cite{abadi2016deep, geyer2017differentially}.
To mitigate the limitations of fixed clipping, several adaptive gradient clipping techniques have been proposed. Adaptive Quantile Clipping (AQC) \cite{geyer2017differentially} estimates a target quantile of gradient norms to dynamically adjust the clipping threshold. This approach reduces unnecessary information loss and improves utility. Similarly, Andrew et al. \cite{andrew2019differentially} propose tracking gradient norms over multiple rounds to tune clipping levels in response to training dynamics.
In federated settings, adaptive clipping methods are increasingly integrated with DP mechanisms to handle gradient heterogeneity and non-IID data. Fu et al. \cite{fu2024dpsur} and Liu et al. \cite{liu2024crosssilo} propose heuristics that adjust clipping thresholds using historical gradient statistics, which can help balance privacy-utility trade-offs. However, these methods often rely on empirical hyperparameters such as fixed percentiles and historical window sizes, which may be difficult to tune and fail to adapt in dynamic environments.
Recent works such as DP-FedMeta \cite{yang2024adadp} and SqueezingFL \cite{wang2022squeezing} incorporate adaptive clipping into federated meta-learning frameworks. These methods use past differentially private gradients to estimate new clipping thresholds while attempting to maintain privacy guarantees. Yet, a common limitation across existing approaches is the lack of principled analysis connecting the clipping threshold to the privacy budget \( \varepsilon \) or learning objective. Moreover, reliance on historical gradients may lead to stale estimates that misalign with heterogeneous privacy budgets and round-wise training dynamics, especially in systems with high client variability.

\textbf{Positioning of This Work.}
In contrast to prior work, our method PAC-DP aims to bridge the gap between theory and practice by explicitly modeling the relationship between the privacy budget \( \varepsilon \) and the optimal clipping threshold \( C^* \). Instead of relying on heuristic tuning or fixed historical statistics, PAC-DP leverages a Simulation-CurveFitting approach to learn the \( \varepsilon \mapsto C^* \) mapping from a server-side proxy dataset. This enables principled, real-time adaptation of clipping thresholds per training round, allowing clients to receive clipping levels consistent with their personalized DP budgets and heterogeneous privacy budgets and round-wise training dynamics. To our knowledge, this is the first work to provide both theoretical analysis and practical mechanisms for privacy-budget-aligned adaptive clipping in federated learning.

\section{Preliminaries}

In federated learning each client \( i \in \{1, \ldots, N\} \) holds local dataset \( \mathcal{D}_i \), and model updates are computed locally and aggregated by a central server without exchanging raw data \cite{mcmahan2017communication}. The global objective is:
\[
w^* = \arg \min_w \sum_{i=1}^N p_i F_i(w), \quad p_i = \frac{|\mathcal{D}_i|}{\sum_j |\mathcal{D}_j|},
\]
where \( F_i(w) \) is the empirical loss on client \( i \). The server aggregates updates as:
\[
w = \sum_{i=1}^N p_i w_i.
\]

This process iterates over local training, server aggregation, and global model broadcasting until convergence.

\begin{myDef}[$(\varepsilon, \delta)$-DP]
A randomized mechanism \( \mathcal{M} \) satisfies $(\varepsilon, \delta)$-DP if for any two adjacent datasets \( D, D' \) and any measurable set \( S \) \cite{dwork2013algorithmic}:
\[
\Pr[\mathcal{M}(D) \in S] \leq e^\varepsilon \Pr[\mathcal{M}(D') \in S] + \delta.
\]
\end{myDef}

The Gaussian mechanism adds $\xi\sim\mathcal N(0,(z\,\Delta s)^2 I)$, where $z$ is the noise multiplier.
To achieve $(\varepsilon,\delta)$-DP, it suffices to set
\[
z \ge \frac{\sqrt{2\ln(1.25/\delta)}}{\varepsilon}.
\]

In FL, DP is typically enforced via local noise addition to gradients or model updates.

\begin{myDef}[Record-level Local DP]
For client $i$ in round $t$, let $B_i^t$ be the local minibatch used to compute the update. 
Two minibatches $B_i^t$ and ${B_i^t}'$ are adjacent if they differ in exactly one record.
A client-side mechanism $\mathcal M_i^t$ satisfies $(\varepsilon_i^t,\delta)$-local DP if for all adjacent $B_i^t,{B_i^t}'$ and all measurable sets $S$ \cite{dwork2013algorithmic}:
\[
\Pr[\mathcal M_i^t(B_i^t)\in S]\le e^{\varepsilon_i^t}\Pr[\mathcal M_i^t({B_i^t}')\in S]+\delta.
\]
\end{myDef}

\textbf{Threat Model}
We consider an honest-but-curious server that observes the participation set $\mathcal S_t$ and all uploaded messages (perturbed gradients or model updates) from participating clients, as well as the broadcast global model in each round. The adversary does not observe any raw client data, per-example gradients, or intermediate unclipped values. Our privacy guarantee is defined with respect to the server’s observable view and does not rely on secure channels. We do not provide privacy guarantees for client participation (i.e., whether a client participates in a given round); our guarantees focus on record-level privacy within the uploaded messages.

\section{Problem Formulation}

We consider a federated learning system consisting of a central server and \( N \) clients, where each client \( i \in \{1, \ldots, N\} \) holds a private local dataset \( \mathcal{D}_i \). The global objective is to learn a model parameter vector \( w \in \mathbb{R}^d \) by minimizing a weighted empirical loss:
\begin{equation}
\min_{w \in \mathbb{R}^d} F(w) := \sum_{i=1}^{N} p_i F_i(w), \quad \text{where } p_i = \frac{|\mathcal{D}_i|}{\sum_j |\mathcal{D}_j|}
\label{eq:global_obj}
\end{equation}
where \( F_i(w) \) denotes the local loss function on client \( i \).

Each client computes a stochastic gradient \( g_i^t \) of the loss with respect to the local model at training round \( t \). To enforce $(\varepsilon_i, \delta)$-local differential privacy (LDP) for each client, noise must be added to the gradients prior to upload. However, the magnitude of the required noise is tied to the sensitivity of the gradient, which is controlled by gradient clipping.

\textbf{Record-level Local DP via Per-example Clipping.}
In round $t$, client $i$ samples a local minibatch $B_i^t$ of size $B$ and computes per-example gradients $g_i^t(x)$ for $x\in B_i^t$.
It applies per-example $\ell_2$ clipping:
\begin{equation}
\tilde g_i^t(x)= g_i^t(x)\cdot \min\left(1, \frac{C_i^t}{\|g_i^t(x)\|_2}\right).
\label{eq:clip_per_example}
\end{equation}
Then it forms the averaged clipped gradient:
\begin{equation}
\bar g_i^t=\frac{1}{B}\sum_{x\in B_i^t}\tilde g_i^t(x),
\label{eq:avg_clipped}
\end{equation}
and adds Gaussian noise using a \emph{noise multiplier} $z_i^t$:
\begin{equation}
\hat g_i^t=\bar g_i^t+\xi_i^t,\quad \xi_i^t\sim \mathcal N\!\left(0,\left(z_i^t\frac{C_i^t}{B}\right)^2 I\right).
\label{eq:dp_noise_multiplier}
\end{equation}

This mechanism satisfies \((\varepsilon_i, \delta)\)-DP for each client. However, selecting an inappropriate \( C_i^t \) can lead to large \( C_i^t \) with higher noise (weakened utility) and small \( C_i^t \) with excessive clipping (information loss).

\textbf{Adaptive Clipping for Personalized DP.}
Our central goal is to design an adaptive clipping mechanism that dynamically selects the optimal clipping threshold \( C_i^t \) per client and round, guided by the client’s privacy budget \( \varepsilon_i \) and heterogeneous privacy budgets and round-wise training dynamics.

We define the optimal clipping threshold \( C^*(\varepsilon_i) \) for client \( i \) as the value minimizing expected utility loss under a fixed privacy budget:
\begin{equation}
C^*(\varepsilon_i) = \arg \min_{C > 0} \mathbb{E}\left[\mathcal{L}(C, \varepsilon_i)\right],
\label{eq:optimal_clip}
\end{equation}
where \( \mathcal{L}(C, \varepsilon_i) \) captures the degradation in model utility due to the combined effect of clipping and noise at threshold \( C \).

In practice, directly solving Eq.~\eqref{eq:optimal_clip} is challenging, as \( \mathcal{L}(\cdot) \) is not available in closed form and depends on task-specific distributions. We therefore seek to approximate \( C^*(\varepsilon_i) \) via empirical simulation and curve fitting over a proxy dataset—a strategy formalized in our proposed method, PAC-DP.
\section{Federated Learning with Adaptive Gradient Clipping Personalized DP}

We consider a standard federated learning setting with a central server and \( N \) edge clients. Each client \( i \in \{1, \dots, N\} \) holds a private dataset \( \mathcal{D}_i \) and participates in collaboratively learning a global model \( w \in \mathbb{R}^d \). The objective is to minimize a weighted empirical loss function:
\begin{equation}
\min_{w \in \mathbb{R}^d} F(w) := \sum_{i=1}^{N} p_i F_i(w), \quad \text{where } p_i = \frac{|\mathcal{D}_i|}{\sum_{j=1}^N |\mathcal{D}_j|}.
\label{eq:global_objective}
\end{equation}
Here, \( F_i(w) := \mathbb{E}_{z \sim \mathcal{D}_i}[\ell(w; z)] \) denotes the expected loss on client \( i \).

To protect user data, each client must satisfy \((\varepsilon_i, \delta)\)-local differential privacy. The gradient update is therefore perturbed locally before transmission:
\begin{equation}
\hat g_i^t = \bar g_i^t + \xi_i^t,\quad 
\xi_i^t\sim \mathcal N\!\left(0,\left(z_i^t\frac{C_i^t}{B}\right)^2 I\right),
\end{equation}
where $\bar g_i^t=\frac{1}{B}\sum_{x\in B_i^t}\mathrm{clip}(g_i^t(x),C_i^t)$ is the minibatch average of per-example clipped gradients. The choice of \( C_i^t \) critically impacts both the amount of noise added and the amount of information retained from the gradient, thus influencing the privacy-utility trade-off.

\subsection{Challenges in Personalized DP Clipping}

The design of an effective personalized DP mechanism for FL presents several key challenges:

\begin{itemize}
    \item \textbf{Heterogeneous Gradients}: Due to non-IID data across clients, gradient magnitudes vary significantly, rendering fixed clipping thresholds suboptimal.
    \item \textbf{Personalized Privacy Budgets}: Clients may select different \( \varepsilon_i \) values based on sensitivity or regulatory compliance, requiring clipping thresholds to be tailored accordingly.
    \item \textbf{Efficiency and Scalability}: Clipping threshold selection must be computationally efficient and avoid real-time hyperparameter tuning or storage of historical gradients.
\end{itemize}

Existing approaches often rely on static heuristics (e.g., fixed percentiles) or historical statistics, which are brittle in dynamic or heterogeneous environments and lack theoretical foundations.

\begin{figure*}[!t]  
    \centering  
    \includegraphics[width=\textwidth, trim=13cm 9.5cm 9.5cm 10cm, clip]{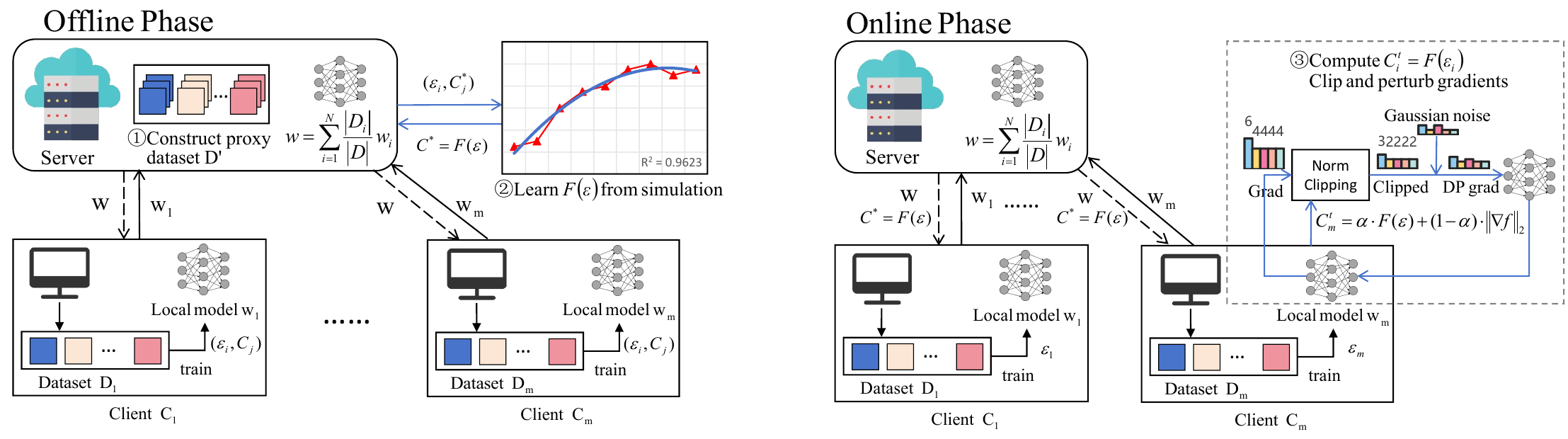}
    \caption{Overview of the PAC-DP framework. A proxy dataset is used to simulate privacy-utility trade-offs and fit a mapping \( C^* = F(\varepsilon) \), which guides personalized clipping and noise injection during training.}  
    \label{fig:framework_1} 
\end{figure*}

\subsection{Solution Overview}

To address these challenges, we propose \textbf{PAC-DP}, a federated optimization framework that enables \textit{adaptive gradient clipping under personalized DP constraints}. The key idea is to learn a smooth, parameterized function \( F(\varepsilon) \) that maps each privacy budget \( \varepsilon \) to its corresponding optimal clipping threshold \( C^* \), i.e.,
\[
C^* = F(\varepsilon).
\]

This mapping is computed \textit{offline} using a server-side proxy dataset through a simulation-curve fitting process, and then used \textit{online} during training to assign each client its personalized clipping threshold.

An overview of the PAC-DP workflow is shown in Fig.~\ref{fig:framework_1}.
Specifically, the PAC-DP framework consists of two core phases:
\begin{itemize}
    \item \textbf{Offline Phase:} The server constructs a proxy dataset \( \hat{D} \) that reflects the global data distribution, simulates training under various \((\varepsilon, C)\) configurations, and fits a function \( F(\varepsilon) \) using regression.
    \item \textbf{Online Phase:} During each round of FL, clients compute \( C_i^t = F(\varepsilon_i) \) locally and perform DP-SGD with clipping and noise injection. The server aggregates the perturbed updates to refine the global model.
\end{itemize}

\subsection{Round-wise Schedule for Clipping}
\label{subsec:schedule}

PAC-DP sets the per-round clipping bound as
\begin{equation}
C_i^t = F(\varepsilon_i)\cdot \lambda(t),
\label{eq:Ci_schedule}
\end{equation}
where $F(\varepsilon_i)$ is the offline learned budget-conditioned mapping and $\lambda(t)$ is a lightweight
round-wise schedule shared across clients.

\paragraph{Schedule design.}
Let $T$ be the total number of communication rounds and $t\in\{0,1,\ldots,T-1\}$.
We use a plateau-then-decay schedule:
\begin{equation}
\lambda(t)=
\begin{cases}
1, & 0\le t < T_s,\\[4pt]
\lambda_{\min} + (1-\lambda_{\min})\cdot \frac{1+\cos\!\left(\pi\cdot \frac{t-T_s}{T-T_s}\right)}{2},
& T_s \le t \le T-1,
\end{cases}
\label{eq:lambda_cosine}
\end{equation}
where $T_s=\lfloor r_s T\rfloor$, $r_s\in(0,1)$ controls when the decay starts, and
$\lambda_{\min}\in(0,1]$ is the final minimum scaling factor.
By construction, $\lambda(t)\in[\lambda_{\min},1]$ and is non-increasing.

Unless otherwise stated, we fix $(r_s,\lambda_{\min})=(0.6,0.1)$ for all datasets and model architectures.
This avoids data-dependent tuning and ensures reproducibility.

Early in training, gradients typically have larger norms and an aggressive decay may induce excessive clipping bias;
hence we keep $\lambda(t)=1$ for the first $r_sT$ rounds.
In later stages, decreasing $\lambda(t)$ reduces the DP noise scale (which is proportional to $C_i^t$ in our mechanism),
improving stability and final utility.

\subsection{Adaptive Personalized DP Training Algorithm}

We formalize the PAC-DP training procedure in the following algorithm:

\begin{algorithm}[htbp]
\caption{Adaptive Gradient Clipping Personalized DP Federated Learning (PAC-DP)}
\label{alg:PAC-DP}
\begin{algorithmic}[1]
\Require Total rounds \( T \), initial model \( w^{(0)} \), privacy budgets \( \{\varepsilon_i\}_{i=1}^N \), proxy dataset \( \hat{D} \)
\Ensure Final model \( w^{(T)} \)
\State \textbf{[Offline Phase]}
\State Construct proxy dataset $\hat{D}$ from public data
\State Learn \( F(\varepsilon) \) from simulation (see Algorithm 2)
\State \textbf{[Online Training Phase]}
\For{$t = 1$ to $T$}
    \State Server samples client subset \( \mathcal{S}_t \subseteq \{1, \ldots, N\} \)
    \For{each client \( i \in \mathcal{S}_t \)}
        \State Compute clipping threshold: $C_i^t = F(\varepsilon_i)\cdot \lambda(t)$
        \State Compute per-example gradients $g_i^t(x)$ for $x\in B_i^t$ and clip: $\tilde g_i^t(x)= g_i^t(x)\min(1, C_i^t/\|g_i^t(x)\|_2)$
        \State Average and perturb:$\bar g_i^t=\frac{1}{B}\sum_{x\in B_i^t}\tilde g_i^t(x)$,
$\hat g_i^t = \bar g_i^t + \mathcal N(0,(z_i^t C_i^t/B)^2 I)$
        \State Update local model:
        \[
        w_i^{(t)} \gets w_i^{(t-1)} - \eta \hat{g}_i^t
        \]
    \EndFor
    \State Server aggregates:
    \[
    w^{(t)} \gets \sum_{i \in \mathcal{S}_t} p_i w_i^{(t)}
    \]
\EndFor
\State \Return \( w^{(T)} \)
\end{algorithmic}
\end{algorithm}

PAC-DP ensures that each client applies a clipping threshold tailored to their privacy budget, minimizing the impact of noise on utility. The use of \( F(\varepsilon) \) enables real-time inference of \( C^* \) with negligible overhead.
\section{Privacy and Utility Analysis}

We present the theoretical foundations of PAC-DP by analyzing its privacy guarantees and utility convergence under adaptive gradient clipping. Our analysis formalizes how PAC-DP achieves \((\varepsilon, \delta)\)-differential privacy per client, and demonstrates how adaptive clipping guided by \( C^* = F(\varepsilon) \) yields tighter convergence bounds compared to fixed-clipping baselines.

\subsection{Privacy Objectives}

PAC-DP satisfies differential privacy by combining three mechanisms:
(1) per-client gradient clipping with norm bound \( C^* \); 
(2) Gaussian noise addition calibrated to the sensitivity; and 
(3) composition across training rounds per client using an RDP accountant.

We begin by recalling the standard Gaussian mechanism.
Applying this to client-side clipped gradients:

\begin{theorem}[Per-round Record-level Local DP]
\label{thm:round_ldp}
In round $t$, for client $i$, the mechanism in Eq.~\eqref{eq:dp_noise_multiplier} satisfies $(\varepsilon_i^t,\delta)$-local DP under record-level adjacency, where
\[
\varepsilon_i^t = \frac{\sqrt{2\ln(1.25/\delta)}}{z_i^t}.
\]
\end{theorem}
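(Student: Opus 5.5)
The plan is to reduce the statement to the classical Gaussian mechanism recalled in the Preliminaries. Recall that adding $\mathcal N(0,(z\,\Delta s)^2 I)$ to a query with $\ell_2$-sensitivity $\Delta s$ yields $(\varepsilon,\delta)$-DP whenever $z \ge \sqrt{2\ln(1.25/\delta)}/\varepsilon$. Equivalently, for a given noise multiplier $z_i^t$ the mechanism is $(\varepsilon,\delta)$-DP with $\varepsilon = \sqrt{2\ln(1.25/\delta)}/z_i^t$. So the whole proof amounts to identifying the query, computing its sensitivity under record-level adjacency, and checking that the noise in Eq.~\eqref{eq:dp_noise_multiplier} has the form $z_i^t\,\Delta s$.

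\textbf{Step 1 (the query).} Take the query to be $q(B_i^t)=\bar g_i^t$, the average of per-example clipped gradients in Eq.~\eqref{eq:avg_clipped}. The model $w_i^{(t-1)}$ at which gradients are evaluated, the threshold $C_i^t=F(\varepsilon_i)\lambda(t)$, and $z_i^t$ must all be fixed before the minibatch is touched. This holds because $F$ is learned offline on public proxy data and $\lambda(t)$ is a data-independent schedule. Hence $C_i^t$ is a constant with respect to $B_i^t$, which is what allows it to appear in the noise scale without itself leaking information.

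\textbf{Step 2 (sensitivity).} Let $B_i^t$ and ${B_i^t}'$ differ in one record, $x$ versus $x'$. All other summands cancel, so
\[
\|q(B_i^t)-q({B_i^t}')\|_2=\frac{1}{B}\|\tilde g_i^t(x)-\tilde g_i^t(x')\|_2 .
\]
By Eq.~\eqref{eq:clip_per_example}, each clipped gradient has norm at most $C_i^t$.

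This is the step I expect to be the main obstacle. Under replace-one adjacency the triangle inequality only gives $2C_i^t/B$. The noise in Eq.~\eqref{eq:dp_noise_multiplier} is instead scaled to $C_i^t/B$, which matches add/remove-one adjacency, or the common convention of defining sensitivity as $C_i^t/B$ per record. I would resolve this by stating explicitly that adjacency is interpreted so that $\Delta s = C_i^t/B$, e.g. a substitution in which one record's contribution is replaced by zero. If the paper insists on replacement, I would note that the claimed $\varepsilon_i^t$ holds with $z_i^t$ understood relative to $2C_i^t/B$, i.e., with a factor-2 adjustment. Either way the bound takes the form $\Delta s = C_i^t/B$, possibly up to that constant.

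\textbf{Step 3 (conclude).} With $\Delta s = C_i^t/B$, the noise $\xi_i^t\sim\mathcal N(0,(z_i^t\,\Delta s)^2 I)$ is exactly the Gaussian mechanism with multiplier $z_i^t$. Setting $\varepsilon_i^t=\sqrt{2\ln(1.25/\delta)}/z_i^t$ makes the condition $z_i^t\ge\sqrt{2\ln(1.25/\delta)}/\varepsilon_i^t$ hold with equality. The classical guarantee then gives the inequality in the Record-level Local DP definition for every measurable $S$.

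I would add the standard caveat that the classical analysis requires $\varepsilon_i^t<1$, i.e. $z_i^t>\sqrt{2\ln(1.25/\delta)}$. Finally, post-processing covers the local update $w_i^{(t)}=w_i^{(t-1)}-\eta\hat g_i^t$ and the server's view, since both are deterministic functions of $\hat g_i^t$ and public quantities.
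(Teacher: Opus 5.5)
Your route is the same as the paper's: treat $\bar g_i^t$ as the query, take $\Delta_2=C_i^t/B$, and apply the classical Gaussian mechanism with multiplier $z_i^t$.

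\textbf{The Step~2 obstacle is a flaw in the paper, not in your proof.} The paper's proof asserts that \emph{replacing} one record moves $\bar g_i^t$ by at most $C_i^t/B$. But take two records whose clipped gradients are $C_i^t u$ and $-C_i^t u$ for a unit vector $u$: swapping one for the other moves $\bar g_i^t$ by exactly $2C_i^t/B$.

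\textbf{Drop the hedge ``possibly up to that constant.''} The factor of two is not absorbed by slack in the classical bound. Take $\delta=10^{-5}$ and $z_i^t=10$, so the claimed value is $\varepsilon_i^t\approx 0.48$. The worst replacement pair is a Gaussian mean shift of $\mu=2/z_i^t=0.2$ noise standard deviations. Its exact privacy profile is $\Phi(\mu/2-\varepsilon/\mu)-e^{\varepsilon}\Phi(-\mu/2-\varepsilon/\mu)$, where $\Phi$ is the standard normal CDF. At $\varepsilon=0.48$ this is roughly $6\times10^{-4}\gg10^{-5}$.

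So under the paper's replace-one adjacency the theorem is false in general as written, and one of your fixes is mandatory:
\begin{itemize}
\item zero-out adjacency, or add/remove adjacency with the normalizer $1/B$ held fixed; both give $\Delta_2=C_i^t/B$ and the stated $\varepsilon_i^t$;
\item or keep replacement and conclude $\varepsilon_i^t=2\sqrt{2\ln(1.25/\delta)}/z_i^t$.
\end{itemize}

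Your $\varepsilon_i^t<1$ caveat is also a genuine hypothesis the paper omits, since the classical calibration is only proved in that range and can fail for large $\varepsilon$. Your remarks that $C_i^t$ and $z_i^t$ must be data-independent, and that the local update is post-processing, are correct details the paper's proof leaves implicit. With these qualifications your argument is complete.
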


\begin{proof}
Replacing one record in $B_i^t$ changes the averaged clipped gradient $\bar g_i^t$ by at most $C_i^t/B$ in $\ell_2$ norm, hence the sensitivity is $\Delta_2 = C_i^t/B$.
The Gaussian mechanism with noise standard deviation $z_i^t\Delta_2 = z_i^t C_i^t/B$ achieves $(\varepsilon_i^t,\delta)$-DP with $\varepsilon_i^t = \sqrt{2\ln(1.25/\delta)}/z_i^t$.
\end{proof}

\begin{corollary}[Per-client Composition over Participation Rounds]
Let $\mathcal T_i=\{t: i\in \mathcal S_t\}$.
Under basic composition, composing per-round $(\varepsilon_i^t,\delta_{\text{rd}})$ guarantees yields
\[
\varepsilon_i^{\text{basic}} = \sum_{t\in\mathcal T_i}\varepsilon_i^t,\qquad
\delta_i^{\text{basic}} \le |\mathcal T_i|\cdot \delta_{\text{rd}}.
\]
In this paper, however, we \emph{do not} report $\delta_i^{\text{basic}}$.
Instead, we fix a final $\delta$ and compute $\varepsilon_i$ via an RDP accountant.
\end{corollary}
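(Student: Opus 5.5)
The corollary follows from Theorem~\ref{thm:round_ldp} together with the standard basic composition theorem for approximate DP. The statement that $\delta_i^{\text{basic}}$ is not reported is a remark about what the paper does, not a mathematical claim, so only the displayed bounds need proof.

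First, fix client $i$ and observe that record-level adjacency is per-round: only rounds $t\in\mathcal T_i$ produce an uploaded message from client $i$ in the server's view. For $t\notin\mathcal T_i$, client $i$'s data does not enter the view, so those rounds contribute zero privacy loss. The one subtlety is that the global model $w^{(t)}$ broadcast in later rounds depends on earlier perturbed uploads. This is handled by treating the whole protocol as an adaptive composition. In each round $t\in\mathcal T_i$, the client's mechanism $\mathcal M_i^t$ is applied to its data with auxiliary input given by all previously released outputs. Everything else is post-processing of released values plus data-independent randomness, including the sampling of $\mathcal S_t$ and other clients' messages (which do not depend on client $i$'s records). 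Post-processing preserves DP.

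Second, Theorem~\ref{thm:round_ldp} gives that, for every fixed auxiliary input, $\mathcal M_i^t$ is $(\varepsilon_i^t,\delta_{\text{rd}})$-DP with respect to a change of one record. The sensitivity bound $C_i^t/B$ holds for any model point at which gradients are evaluated, since the clipping threshold $C_i^t=F(\varepsilon_i)\lambda(t)$ is data-independent. This is exactly the hypothesis of the adaptive basic composition theorem of Dwork--Roth: composing $k$ mechanisms that are $(\varepsilon_j,\delta_j)$-DP yields $(\sum_j\varepsilon_j,\sum_j\delta_j)$-DP. Applied with $k=|\mathcal T_i|$ and $\delta_j=\delta_{\text{rd}}$, this gives $\varepsilon_i^{\text{basic}}=\sum_{t\in\mathcal T_i}\varepsilon_i^t$ and $\delta_i^{\text{basic}}\le|\mathcal T_i|\delta_{\text{rd}}$.

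The step that needs care is the adjacency notion across rounds. A single record of $\mathcal D_i$ may appear in several minibatches $B_i^t$, so a dataset-level neighbor can change the input in every participating round at once. Basic composition already accounts for this worst case, because each round is charged its full per-round guarantee regardless of whether the differing record is sampled; the bound therefore holds without any subsampling amplification. I would state this explicitly, and note that when $\mathcal S_t$ is chosen randomly the bound holds conditionally on the realized participation set. This conditioning is consistent with the threat model, which does not protect participation.
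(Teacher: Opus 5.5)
Your proposal is correct and takes the same approach as the paper, which gives no separate proof and treats the corollary as an immediate consequence of the per-round guarantee in Theorem~\ref{thm:round_ldp} combined with basic composition over the realized participation set $\mathcal T_i$. Your extra detail on adaptive composition, on post-processing of the broadcast model and of non-participation rounds, and on lifting minibatch-level adjacency to one-record substitution in $\mathcal D_i$ fills in steps the paper leaves unstated; for the adjacency step, the clean argument is that data-independent batch selection makes each round a mixture of $(\varepsilon_i^t,\delta_{\text{rd}})$-DP mechanisms, so the round is itself $(\varepsilon_i^t,\delta_{\text{rd}})$-DP.
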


We implement an RDP accountant to compute each client’s final $\varepsilon_i$ based on its realized participation count $|\mathcal T_i|$ and per-round noise multipliers $\{z_i^t\}_{t\in\mathcal T_i}$.

\begin{lemma}[Per-round RDP of the Gaussian mechanism]
In a participation round $t$, client $i$ applies per-example clipping and adds Gaussian noise
with standard deviation $(z_i^t C_i^t/B)$ to the minibatch-averaged clipped gradient.
Let $z_i^t$ denote the noise multiplier w.r.t. the $\ell_2$-sensitivity $\Delta_2=C_i^t/B$.
Then the mechanism satisfies $(\alpha,\rho_i^t(\alpha))$-RDP with
\[
\rho_i^t(\alpha) = \frac{\alpha}{2(z_i^t)^2}, \quad \forall \alpha>1.
\]
\end{lemma}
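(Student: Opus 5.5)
The plan is to reduce the statement to the standard R\'enyi divergence between two isotropic Gaussians with equal covariance. The argument has three steps: bound the sensitivity, compute the divergence in closed form, and take the worst case over adjacent minibatches.

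\textbf{Sensitivity.} Fix round $t$, client $i$, and the current local model $w_i^{(t-1)}$. The model is determined by earlier outputs, so by the adaptive definition of RDP it can be treated as a fixed input. Consider two adjacent minibatches $B_i^t,{B_i^t}'$ that differ in one record. As in the proof of Theorem~\ref{thm:round_ldp}, the per-example clipping in Eq.~\eqref{eq:clip_per_example} gives $\|\tilde g_i^t(x)\|_2\le C_i^t$ for every record. The averages in Eq.~\eqref{eq:avg_clipped} differ only through the one replaced term, so
\[
\|\bar g_i^t-\bar g_i^{t\prime}\|_2\le \frac{\Delta}{B},
\]
where $\Delta$ bounds the norm of the difference of two clipped per-example gradients. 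Following the paper's convention in Theorem~\ref{thm:round_ldp}, I take $\Delta_2=C_i^t/B$. Strictly, replace-one adjacency gives $\Delta\le 2C_i^t$, while add/remove-style adjacency gives $\Delta\le C_i^t$. I will note that the lemma's constant is stated relative to $\Delta_2$, so any factor here is absorbed into the definition of $z_i^t$ as the multiplier with respect to $\Delta_2$.

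\textbf{Closed-form divergence.} The output distributions are $P=\mathcal N(\mu,\sigma^2 I)$ and $Q=\mathcal N(\mu',\sigma^2 I)$, with $\sigma=z_i^t C_i^t/B=z_i^t\Delta_2$. I will compute the R\'enyi divergence of order $\alpha>1$ directly:
\[
D_\alpha(P\|Q)=\frac{1}{\alpha-1}\ln\int p^\alpha q^{1-\alpha}.
\]
The integrand is an exponential of a quadratic form. Completing the square in the exponent shows that it integrates to
\[
\exp\!\left(\frac{\alpha(\alpha-1)\|\mu-\mu'\|_2^2}{2\sigma^2}\right).
\]
Hence $D_\alpha(P\|Q)=\alpha\|\mu-\mu'\|_2^2/(2\sigma^2)$. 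The computation factorizes across coordinates after rotating so that $\mu-\mu'$ lies along one axis, which reduces it to a one-dimensional Gaussian integral.

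\textbf{Worst case and conclusion.} Taking the supremum over adjacent pairs with $\|\mu-\mu'\|_2\le\Delta_2$ gives $\alpha\Delta_2^2/(2z_i^{t2}\Delta_2^2)=\alpha/(2(z_i^t)^2)$. This expression is symmetric in $P$ and $Q$, so both directions of adjacency are covered.

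The Gaussian integral is routine. The main obstacle is the sensitivity step, namely making the adjacency notion consistent so that $\Delta_2=C_i^t/B$ is actually an upper bound. My first attempt is to state the lemma under the same adjacency convention the paper uses in Theorem~\ref{thm:round_ldp} and to remark that the constant simply rescales otherwise. A secondary remark is that the minibatch is treated as given, so no subsampling amplification is claimed; the stated bound is therefore conservative.
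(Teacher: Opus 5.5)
Your proposal is correct and is the standard argument the paper implicitly relies on. The paper gives no separate proof: it simply invokes the R\'enyi divergence of the Gaussian mechanism with the sensitivity $\Delta_2=C_i^t/B$ asserted in the proof of Theorem~\ref{thm:round_ldp}, and your closed form $D_\alpha=\alpha\|\mu-\mu'\|_2^2/(2\sigma^2)$, maximized over $\|\mu-\mu'\|_2\le\Delta_2$, reproduces it. Your sensitivity caveat is well placed, but drop the claim that the extra factor is ``absorbed into $z_i^t$''. The noise standard deviation is pinned at $z_i^tC_i^t/B$, so the paper's replace-one adjacency gives sensitivity $2C_i^t/B$ and hence $\rho_i^t(\alpha)=2\alpha/(z_i^t)^2$. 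The stated constant therefore holds only under add/remove adjacency with a fixed normalizer $B$, or if you read $\Delta_2=C_i^t/B$ as a hypothesis, as the lemma's wording permits.
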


For client $i$, let $\mathcal T_i=\{t: i\in\mathcal S_t\}$ be its realized participation set.
Since we do not use privacy amplification by subsampling, we compose only over $\mathcal T_i$:
\[
\rho_i(\alpha) = \sum_{t\in\mathcal T_i} \rho_i^t(\alpha)
              = \sum_{t\in\mathcal T_i} \frac{\alpha}{2(z_i^t)^2}.
\]

Given a fixed $\delta$ (e.g., $\delta=10^{-5}$), client $i$ satisfies $(\varepsilon_i,\delta)$-DP with
\[
\varepsilon_i
= \min_{\alpha\in\mathcal A}\left\{\rho_i(\alpha) + \frac{\ln(1/\delta)}{\alpha-1}\right\},
\quad \mathcal A=\{2,3,\ldots,64\}.
\]

\subsection{Utility Objectives}

We now analyze PAC-DP's convergence behavior under smooth loss functions. The key idea is that reducing the product \( z C_{\max}/B \) (noise scale times sensitivity) leads to improved optimization performance. The use of adaptive clipping \( C^* = F(\varepsilon) \) minimizes this product relative to fixed thresholds.

We make the following assumptions:

\begin{itemize}
    \item[\textbf{(A1)}] Each local function \( f_i(w) \) is \( L \)-smooth:
    \[
    \|\nabla f_i(w) - \nabla f_i(w')\| \leq L \|w - w'\|.
    \]
\item[\textbf{(A2)}] Per-example gradients are clipped so that the minibatch-averaged clipped gradient satisfies $\|g_i^{(t)}\|_2 \le C_{\max}$.
\item[\textbf{(A3)}] Each client adds Gaussian noise $\xi_i^{(t)}\sim\mathcal N(0,(z^{(t)} C_{\max}/B)^2 I)$ to the averaged clipped gradient.
    \item[\textbf{(A4)}] Clients are sampled uniformly at random with probability \( q \in (0, 1] \).
\end{itemize}

In analysis, we use a uniform upper bound $C_{\max}$ such that $\|g_i^{(t)}\|_2 \le C_{\max}$ for all participating clients and rounds, where $g_i^{(t)}$ denotes the minibatch average of per-example clipped gradients. 
This is satisfied by setting $C_{\max} \coloneqq \max_{i,t} C_i^t$ under our threshold policy.
For simplicity, we write $C_{\max}$ (instead of client- and round-specific $C_i^t$) in the convergence bounds.
We now state convergence results under three canonical objective classes.

\begin{theorem}[Non-convex Convergence]
\label{thm:nonconvex}
Let \( f(w) = \sum_{i=1}^N p_i f_i(w) \) be \( L \)-smooth and non-convex. Under Assumption A1-A4, and using an adaptive gradient clipping threshold  determined by PAC-DP, with step size \( \eta = \mathcal{O}(1/\sqrt{T}) \), PAC-DP algorithm guarantees:
\[
\frac{1}{T} \sum_{t=1}^{T} \mathbb{E} \left\| \nabla f(w^{(t)}) \right\|^2 
\leq \mathcal{O} \left( \frac{1}{\sqrt{MT}} + \frac{z C_{\max}}{B\sqrt{M}} \right).
\]
where $C_{\max}=\max_{i,t} C_i^t$ is the uniform upper bound induced by the threshold policy.

\end{theorem}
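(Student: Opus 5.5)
The argument is the standard one-step descent analysis for $L$-smooth non-convex objectives. We view the aggregate of PAC-DP as a noisy, biased step
\[
w^{(t+1)} = w^{(t)} - \eta\, \hat g^{(t)}, \qquad \hat g^{(t)} = \frac{1}{M}\sum_{i\in\mathcal S_t}\bigl(\bar g_i^{(t)} + \xi_i^{(t)}\bigr),
\]
where $M=|\mathcal S_t|$ is the expected number of participants, $Mq^{-1}\approx N$ under (A4). We also assume that the aggregation weights are effectively uniform; for general $p_i$ one replaces $1/M$ by $\sum_i p_i^2$.

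By (A1), $f$ is $L$-smooth. This gives
\[
\mathbb E f(w^{(t+1)}) \le \mathbb E f(w^{(t)}) - \eta\,\mathbb E\langle \nabla f(w^{(t)}), \hat g^{(t)}\rangle + \tfrac{L\eta^2}{2}\mathbb E\|\hat g^{(t)}\|^2 .
\]

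\textbf{Decomposing the step.} We split $\hat g^{(t)}$ into three parts:
\[
\hat g^{(t)} = \nabla f(w^{(t)}) + b^{(t)} + \zeta^{(t)} + \bar\xi^{(t)} .
\]
Here $b^{(t)}$ is the clipping bias, $\zeta^{(t)}$ is the zero-mean sampling noise from client and minibatch sampling, and $\bar\xi^{(t)}$ is the averaged Gaussian noise.

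The DP noise is independent across clients with covariance $(z C_{\max}/B)^2 I$ by (A3). It is also independent of $w^{(t)}$, so its cross term vanishes and
\[
\mathbb E\|\bar\xi^{(t)}\|^2 \le \frac{d\,z^2C_{\max}^2}{MB^2}.
\]
Each clipped gradient has norm at most $C_{\max}$ by (A2). Hence the sampling variance satisfies $\mathbb E\|\zeta^{(t)}\|^2 \le C_{\max}^2/M$ up to constants.

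The inner-product term is handled with $\langle a,a+b\rangle \ge \tfrac12\|a\|^2-\tfrac12\|b\|^2$. This leaves $-\tfrac{\eta}{2}\|\nabla f\|^2$ plus a bias residual $\tfrac{\eta}{2}\|b^{(t)}\|^2$.

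\textbf{Telescoping.} Summing over $t=1,\dots,T$ and dividing by $\eta T/2$ gives
\[
\frac1T\sum_t \mathbb E\|\nabla f(w^{(t)})\|^2 \le \frac{2(f(w^{(1)})-f^*)}{\eta T} + L\eta\Bigl(G^2 + \frac{C_{\max}^2}{M} + \frac{d z^2 C_{\max}^2}{MB^2}\Bigr) + \frac1T\sum_t\|b^{(t)}\|^2 .
\]
We then choose $\eta = c\sqrt{M/T}$, interpreted as $\mathcal O(1/\sqrt T)$ with $M$ treated as fixed. With this choice the optimization and sampling terms give $\mathcal O(1/\sqrt{MT})$.

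The noise term must come out as $z C_{\max}/(B\sqrt M)$, not its square. To obtain this, we would balance $\eta$ differently, using $\eta \propto \min\{\sqrt{M/T},\, B\sqrt M/(zC_{\max}\sqrt{dT})\}$. Alternatively, we would bound the noise contribution through its norm, $\mathbb E\|\bar\xi\|\le zC_{\max}\sqrt d/(B\sqrt M)$, in a first-order cross term that arises when $w^{(t)}$ is evaluated after local steps. In both routes the dimension $d$ and the constants are absorbed into $\mathcal O(\cdot)$.

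\textbf{Main obstacle.} The hardest step is the clipping bias $b^{(t)}$. It does not vanish in general, and the stated bound has no separate bias term.

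The first thing I would try is to assume bounded per-example gradients, $\|g_i^t(x)\|\le G$, and note that once $C_i^t\ge G$ there is no clipping, so $b^{(t)}=0$. This assumption is implicit in (A2), which treats the clipped average as the gradient surrogate.

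If that fails, the fallback is a Cauchy–Schwarz argument in the style of Chen et al. In that argument the inner product $\langle\nabla f, \mathrm{clip}(g)\rangle$ is lower bounded by $\min(\|\nabla f\|,C)\|\nabla f\|$ under symmetric noise. This yields a bias of order $C_{\max}$-dependent constants that can be absorbed into the $\mathcal O$.

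A secondary point is reconciling the heterogeneous, round-dependent $z_i^t$ and $C_i^t$. This is done by bounding every client's noise variance by the worst case $z C_{\max}/B$, with $z:=\max z_i^t$. The schedule $\lambda(t)\le 1$ only decreases it.
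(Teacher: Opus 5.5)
Your skeleton is the same as the paper's: the descent lemma, splitting the aggregated step into clipping bias, sampling variance and DP-noise variance, then telescoping with $\eta=\mathcal O(1/\sqrt T)$. You are more careful exactly where the paper's proof only asserts its conclusion. The paper posits $\mathbb{E}[G^{(t)}]\approx\nabla f(w^{(t)})$, and it passes from a variance term $\eta\,z^2C_{\max}^2/(MB^2)$ to $zC_{\max}/(B\sqrt M)$ ``after re-arranging''.

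Your balanced step size $\eta\propto\min\{\sqrt{M/T},\,B\sqrt M/(zC_{\max}\sqrt{dT})\}$ handles that second point rigorously. It makes the DP contribution $\mathcal O(\sqrt d\,zC_{\max}/(B\sqrt{MT}))$, which implies the stated term, so you can drop the vague local-step cross-term route.

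There is one slip. The $L\eta\,G^2$ term in your telescoped bound is $\mathcal O(\sqrt{M/T})$ under $\eta=c\sqrt{M/T}$, not $\mathcal O(1/\sqrt{MT})$. Do not bound $\|\mathbb{E}\hat g^{(t)}\|^2$ by a constant. Instead use $\|\nabla f+b^{(t)}\|^2\le 2\|\nabla f\|^2+2\|b^{(t)}\|^2$ and absorb $L\eta^2\|\nabla f\|^2$ into $-\tfrac{\eta}{2}\|\nabla f\|^2$ for $\eta\le 1/(4L)$.

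The genuine gap is the clipping bias, and your fallback does not close it. Under (A1)--(A4) alone the bound cannot hold. Take $f(w)=w^2/2$ with per-example gradients $w+\zeta$, where $\zeta\in\{2,-1\}$ with probabilities $\{1/3,2/3\}$, and $C_i^t\equiv 1$. For $w\in[0,2]$ the clipped mean is $\tfrac13+\tfrac23(w-1)$, which vanishes at $w=1/2$. So the iterates concentrate where $\|\nabla f\|^2=1/4$, while the right-hand side tends to $0$ as $z\to 0$ and $T\to\infty$.

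Hence a bias term cannot be ``absorbed into $\mathcal O(\cdot)$''. The bound is explicit in $z,B,M,C_{\max},T$ and has no constant floor. The Chen et al.\ argument does not rescue it either. It needs symmetric gradient noise, since otherwise a non-vanishing distance-to-symmetry term remains. It also controls $\min\{\|\nabla f\|,C\}\|\nabla f\|$ rather than $\|\nabla f\|^2$.

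Your primary fix (no clipping ever active) is the right one, but it is an additional hypothesis. It is not implicit in (A2), which only bounds the norm of the clipped average. Because of the decaying schedule, it must hold against the smallest threshold actually used, e.g.\ $\|g_i^t(x)\|_2\le\lambda_{\min}\min_j F(\varepsilon_j)$ for all $i,t,x$, not merely against $C_{\max}$. The paper's proof tacitly relies on the same thing through $\mathbb{E}[G^{(t)}]\approx\nabla f(w^{(t)})$. You should state it as an explicit added assumption rather than claim it follows from (A1)--(A4).
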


\begin{proof}
Since each \( f_i \) is \( L \)-smooth, the global objective \( f \) is also \( L \)-smooth. Thus, by the standard descent lemma we have for any iterate \( w^{(t)} \):

\begin{align*}
    f(w^{(t+1)}) \leq & f(w^{(t)}) + \left\langle \nabla f(w^{(t)}), w^{(t+1)} - w^{(t)} \right\rangle\\
    & + \frac{L}{2}\|w^{(t+1)} - w^{(t)}\|^2.
\end{align*}

In the PAC-DP algorithm, the global model is updated via:
\[
w^{(t+1)} = w^{(t)} - \eta \, G^{(t)},
\]
where the aggregated update is given by
\[
G^{(t)} = \frac{1}{M} \sum_{i \in \mathcal{S}_t} \hat{g}_i^{(t)},
\]
and each client \( i \) employs clipped and noise-perturbed gradients:
\[
\hat g_i^{(t)} = g_i^{(t)} + \xi_i^{(t)},\quad \|g_i^{(t)}\|_2\le C^*,\quad 
\xi_i^{(t)}\sim \mathcal N\!\left(0,\left(z^{(t)}\frac{C^*}{B}\right)^2 I\right).
\]

By substituting the update rule into the descent lemma and taking expectation conditioned on \( w^{(t)} \), we have:

\begin{align*}
    \mathbb{E}\left[f(w^{(t+1)})\right] \leq & f(w^{(t)}) - \eta \, \mathbb{E}\left[\left\langle \nabla f(w^{(t)}), G^{(t)} \right\rangle\right] \\
    & + \frac{L \eta^2}{2} \, \mathbb{E}\left[\|G^{(t)}\|^2\right].
\end{align*}

Next, we decompose the aggregated gradient \( G^{(t)} \) into the true gradient (or a proxy thereof) plus error. Under our assumptions and with appropriate clipping (which bounds the bias or error introduced), one can bound:
\[
\mathbb{E}\left[G^{(t)}\right] \approx \nabla f(w^{(t)}),
\]
and the variance term can be decomposed into two parts:
the variance due to the stochasticity of the gradients and the variance due to the added Gaussian noise, which, by independence, scales as
\[
\mathbb{E}\left[\left\|\frac{1}{M}\sum_{i \in \mathcal{S}_t}\xi_i^{(t)}\right\|^2\right] 
\le \frac{(z^{(t)})^2 (C_{\max})^2}{M B^2}.
\]

If each client further uses a mini-batch of size \( B \), standard results in SGD analyses indicate that the intrinsic stochastic gradient variance contributes on the order of \( \frac{1}{\sqrt{MT}} \) to the convergence error.

With standard algebra (see, e.g., the convergence analysis for DP-SGD), one obtains an inequality of the form:
\begin{align*}
    \frac{1}{T} \sum_{t=1}^{T} \mathbb{E}\|\nabla f(w^{(t)})\|^2 
\leq & \mathcal{O}\left(\frac{1}{\eta T} \Bigl(f(w^{(0)}) - f(w^{(T)})\Bigr)\right)\\
& + \mathcal{O}\left(L \eta \, V\right),
\end{align*}
where \( V \) is an upper bound on the aggregated variance. For our setting,
\[
V = \mathcal O\!\left(\frac{1}{MB} + \frac{(z^{(t)})^2 (C_{\max})^2}{M B^2}\right).
\]

Choosing the step size \( \eta = \Theta(1/\sqrt{T}) \) optimally balances the decrease and variance terms. Specifically, after substituting \( \eta = \mathcal{O}(1/\sqrt{T}) \) and re-arranging the inequality, one obtains
\[
\frac{1}{T} \sum_{t=1}^{T} \mathbb{E}\|\nabla f(w^{(t)})\|^2 \leq \mathcal{O}\left(\frac{1}{\sqrt{MT}}\right) + \mathcal{O}\left(\frac{z C_{\max}}{B\sqrt{M}}\right).
\]

The first term, \( \mathcal{O}\left(1/\sqrt{MT}\right) \), represents the convergence error due to stochastic gradients in the absence of privacy noise. The second term, \( \mathcal{O}\left(z C_{\max}/B\sqrt{M}\right) \), quantifies the additional error incurred by the Gaussian noise added for privacy. Thus, the adaptive clipping (through the parameter \( C_{\max} \)) directly influences the noise term, and the overall bound can be succinctly expressed as:
\[
\frac{1}{T} \sum_{t=1}^{T} \mathbb{E}\|\nabla f(w^{(t)})\|^2 
\leq \mathcal{O}\left(\frac{1}{\sqrt{MT}} + \frac{z C_{\max}}{B\sqrt{M}}\right).
\]
\end{proof}

\begin{theorem}[Convex Convergence]
\label{thm:convex}
If \( f \) is convex and \( L \)-smooth, then:
\[
\mathbb{E}[f(\bar{w})] - f(w^*) \leq \mathcal{O} \left( \frac{1}{\sqrt{T}} + \frac{z C_{\max}}{B\sqrt{M}} \right),
\]
where \( \bar{w} := \frac{1}{T} \sum_{t=1}^T w^{(t)} \) is the average model.
\end{theorem}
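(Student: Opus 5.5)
We will follow the template of the proof of Theorem~\ref{thm:nonconvex} and replace the descent-lemma step with the standard distance-to-optimum recursion for convex SGD. Write the global update as $w^{(t+1)} = w^{(t)} - \eta G^{(t)}$ with $G^{(t)} = \frac{1}{M}\sum_{i\in\mathcal S_t}\hat g_i^{(t)}$. Expanding the square gives
\begin{align*}
\|w^{(t+1)}-w^*\|^2 = {} & \|w^{(t)}-w^*\|^2 - 2\eta\langle G^{(t)}, w^{(t)}-w^*\rangle \\
& + \eta^2\|G^{(t)}\|^2 .
\end{align*}
We take the conditional expectation given $w^{(t)}$. As in the non-convex proof, we use (A2)--(A4) and treat $\mathbb E[G^{(t)}]\approx\nabla f(w^{(t)})$, absorbing the clipping bias into the constants. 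We then apply convexity, $\langle\nabla f(w^{(t)}), w^{(t)}-w^*\rangle \ge f(w^{(t)})-f(w^*)$. Rearranging yields
\[
f(w^{(t)})-f(w^*) \le \frac{\mathbb E\|w^{(t)}-w^*\|^2-\mathbb E\|w^{(t+1)}-w^*\|^2}{2\eta} + \frac{\eta}{2}\mathbb E\|G^{(t)}\|^2 .
\]

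Next we bound the second moment $\mathbb E\|G^{(t)}\|^2$. It splits into the clipped-gradient part and the privacy noise part. The clipped-gradient part is at most $C_{\max}^2$ by (A2), or is controlled by $\|\nabla f\|^2$ plus the minibatch variance. The noise part is independent zero-mean Gaussian, so by the same computation as in Theorem~\ref{thm:nonconvex} it contributes at most $z^2C_{\max}^2/(MB^2)$ in the averaged-noise sense. Summing the recursion over $t=1,\dots,T$ telescopes the distance terms to $\|w^{(0)}-w^*\|^2/(2\eta T)$. Jensen's inequality then gives $f(\bar w)\le\frac1T\sum_t f(w^{(t)})$, so
\[
\mathbb E[f(\bar w)]-f(w^*) \le \frac{\|w^{(0)}-w^*\|^2}{2\eta T} + \frac{\eta}{2}\Big(C_{\max}^2 + \frac{z^2C_{\max}^2}{MB^2}\Big).
\]

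Choosing $\eta=\Theta(1/\sqrt T)$ makes the optimization terms $\mathcal O(1/\sqrt T)$. The privacy term then appears as $\eta\, z^2C_{\max}^2/(MB^2)$, and it has to be put in the stated form $zC_{\max}/(B\sqrt M)$. This is the step I expect to be the main obstacle. My first attempt would be to allow $\eta$ to also balance against the noise, taking $\eta = \min\{1/\sqrt T,\ B\sqrt M/(zC_{\max}\sqrt T)\}$ times a distance constant. This gives a noise contribution of order $zC_{\max}/(B\sqrt{MT})$, which is dominated by $zC_{\max}/(B\sqrt M)$ since $T\ge1$, so the stated bound follows as a (looser) upper bound. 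The alternative is to mimic the non-convex argument: take the noise standard deviation of the averaged update, $zC_{\max}/(B\sqrt M)$, as a non-vanishing error floor. This floor enters through the cross term $\langle G^{(t)}-\nabla f, w^{(t)}-w^*\rangle$ with $\|w^{(t)}-w^*\|$ bounded by a diameter $D$, and Cauchy--Schwarz gives $D\cdot zC_{\max}/(B\sqrt M)$ directly. Either way, the clipping bias must be absorbed under the same approximation used in Theorem~\ref{thm:nonconvex}.
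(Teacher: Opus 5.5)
Your proposal is correct at the paper's own level of rigor (clipping bias treated as controlled, dimension factors absorbed into $\mathcal O(\cdot)$), but it takes a different and tighter route.

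\textbf{The paper's route.} It writes the per-iteration progress via the smoothness descent lemma. Telescoping gives a $1/(\eta T)$ term plus a variance term $\eta\bigl(1/B + z^2C_{\max}^2/(MB^2)\bigr)$. Setting $\eta=\Theta(1/\sqrt T)$ leaves the privacy term $z^2C_{\max}^2/(\sqrt T\,MB^2)$. The paper then rewrites this as $zC_{\max}/(B\sqrt M)$ by ``taking square roots'', i.e.\ by reading the variance through the noise standard deviation.

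\textbf{Your route.} You use the distance-to-optimum recursion, apply convexity to the inner product, telescope, and pass to $\bar w$ by Jensen. This is the argument that actually yields a bound on $\mathbb E[f(\bar w)]-f(w^*)$. The descent-lemma telescoping by itself controls averaged squared gradient norms, not the suboptimality of the averaged iterate. Your argument also never uses $L$-smoothness.

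\textbf{The variance-to-standard-deviation step.} You handle this step honestly. Write $x = zC_{\max}/(B\sqrt M)$. A step size independent of $x$ leaves $x^2/\sqrt T$, which is $\mathcal O(x)$ only when $x=\mathcal O(\sqrt T)$, so the paper's conversion is not an inequality in general. Your choice $\eta\propto\min\{1,1/x\}/\sqrt T$ gives $\mathcal O\bigl((1+x)/\sqrt T\bigr)$. This implies the stated bound for all parameter values and is stronger by a factor $1/\sqrt T$ on the privacy term. The price is a step size depending on $z$, $C_{\max}$, $B$, $M$. That is harmless here, because $C_{\max}$ is determined by $F$ and the clients' budgets, and the noise multipliers are fixed before training.

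\textbf{Your second alternative.} It is unnecessary and weaker, for three reasons:
\begin{itemize}
\item it requires a bounded domain (and a projection step) that is neither assumed nor implemented;
\item the noise cross term already has zero conditional mean, so Cauchy--Schwarz there only loosens the bound;
\item it leaves the $\eta x^2$ term in $\mathbb E\|G^{(t)}\|^2$ untouched.
\end{itemize}

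\textbf{Where a diameter would genuinely be needed.} That place is the clipping bias $b_t=\mathbb E[G^{(t)}\mid w^{(t)}]-\nabla f(w^{(t)})$. In your recursion it appears as $\frac1T\sum_t\mathbb E\langle b_t,w^{(t)}-w^*\rangle$ with no factor of $\eta$. So it cannot literally be ``absorbed into the constants''; it would produce a non-vanishing floor. The paper makes the same unquantified assumption, so this is not a gap relative to its proof, but you should state it as an assumption rather than an absorption.

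Minor: since $\bar w$ averages $w^{(1)},\dots,w^{(T)}$, the telescoped distance is $\|w^{(1)}-w^*\|^2$, not $\|w^{(0)}-w^*\|^2$.
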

\begin{proof}

\textbf{Convexity and the Descent Lemma.}
Since \( f \) is convex, for any iterate \( w \) and the optimal point \( w^* \), we have
\[
f(w) - f(w^*) \leq \langle \nabla f(w), w - w^* \rangle.
\]
Moreover, \( f \) being \( L \)-smooth implies
\[
f(w^{(t+1)}) \leq f(w^{(t)}) + \langle \nabla f(w^{(t)}), w^{(t+1)} - w^{(t)} \rangle + \frac{L}{2}\|w^{(t+1)} - w^{(t)}\|^2.
\]

\textbf{Update Rule and Error Decomposition.}
In PAC-DP, the update at round \( t \) is given by
\[
w^{(t+1)} = w^{(t)} - \eta \, g^{(t)},
\]
where the aggregated noisy gradient \( g^{(t)} \) is computed from a set of \( M \) sampled clients. Each client's update is based on a gradient that has been clipped to norm \( C_{\max} \) and perturbed by Gaussian noise:
\[
\hat g_i^{(t)} = g_i^{(t)} + \xi_i^{(t)},\quad \|g_i^{(t)}\|_2\le C_{\max},
\]
where $\xi_i^{(t)}\sim \mathcal N\!\left(0,\left(z^{(t)}\frac{C_{\max}}{B}\right)^2 I\right)$.
Aggregating over the sampled clients,
\[
g^{(t)} = \frac{1}{M} \sum_{i \in \mathcal{S}_t} \hat{g}_i^{(t)}.
\]
We assume that after clipping, the bias introduced is controlled so that \( g^{(t)} \) is an approximate (possibly biased) estimator of \( \nabla f(w^{(t)}) \). The error in \( g^{(t)} \) can be decomposed into two main parts: (1) Stochastic Gradient Variance: Intrinsic randomness in mini-batch sampling, which typically scales as \( \mathcal{O}(1/B) \); (2) Privacy Noise: each client adds Gaussian noise with standard deviation $z^{(t)}C_{\max}/B$ to the minibatch-averaged clipped gradient; after averaging across $M$ sampled clients, the aggregated DP-noise variance scales as $\mathcal{O}\!\left(\frac{(z^{(t)})^2(C_{\max})^2}{MB^2}\right)$.

\textbf{Progress per Iteration.}
Using standard SGD analysis (see, e.g., [Bubeck, 2015] for convex SGD) for a convex, \( L \)-smooth function, we derive that for each iteration:
\begin{align*}
    & \mathbb{E}\left[f(w^{(t+1)}) - f(w^*)\right] \\
    \leq & \mathbb{E}\left[f(w^{(t)}) - f(w^*)\right] - \eta \, \mathbb{E}\left[ \langle \nabla f(w^{(t)}), g^{(t)} \rangle \right] \\
    & + \frac{L\eta^2}{2} \, \mathbb{E}\left[\|g^{(t)}\|^2\right].
\end{align*}

By telescoping this inequality over \( T \) iterations and summing, we obtain an upper bound that consists of two main terms:
(1) A bias term reflecting the optimization error which decays as \( \frac{1}{\eta T} \);
(2) A variance term due to stochastic gradients and privacy noise that scales as \( \eta\left(\frac{1}{B} + \frac{z^2 (C_{\max})^2}{M B^2}\right) \).

\textbf{Choice of Step Size.}
To balance these two terms, we choose the step size \( \eta = \Theta(1/\sqrt{T}) \). Substituting the step size and normalizing by \( T \) gives the following bound for the averaged iterate \( \bar{w} \):
\[
\mathbb{E}[f(\bar{w})] - f(w^*) \leq \mathcal{O}\left(\frac{1}{\sqrt{T}}\right) + \mathcal{O}\left(\eta \left(\frac{z^2 (C_{\max})^2}{M B^2}\right)\right).
\]
With \( \eta = \Theta(1/\sqrt{T}) \), the variance term becomes:
\[
\mathcal{O}\left(\frac{z^2 (C_{\max})^2}{\sqrt{T}\, M B^2}\right).
\]
Taking square roots or, more directly, interpreting this term in light of the noise standard deviation (which is \( z C_{\max} \)) over mini-batch \( B \), we express the noise-induced error as:
\[
\mathcal{O}\left(\frac{z C_{\max}}{B\sqrt{M}}\right).
\]

Combining the optimization error and the additional privacy noise term, we arrive at the final convergence bound:
\[
\mathbb{E}[f(\bar{w})] - f(w^*) \leq \mathbb{E}[f(\bar{w})] - f(w^*) \leq \mathcal{O}\left(\frac{1}{\sqrt{T}} + \frac{z C_{\max}}{B\sqrt{M}}\right).
\]

The result indicates that with an appropriate choice of the step size (and under our assumptions), the error in the averaged iterate decays at a rate of \(1/\sqrt{T}\), with an additional penalty proportional to the scaled noise \( z C_{\max} \) that is ameliorated by increasing the batch size \( B \).
\end{proof}

\begin{theorem}[Strongly Convex Convergence]
\label{thm:strongly_convex}
If \( f \) is \( \mu \)-strongly convex and \( L \)-smooth, PAC-DP achieves:
\[
\mathbb{E}[f(\bar{w})] - f(w^*) \leq \mathcal{O} \left( \frac{1}{\mu T} + \frac{z C_{\max}}{\mu B\sqrt{M}} \right).
\]
\end{theorem}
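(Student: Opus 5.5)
The plan is to follow the same template as the proofs of Theorem~\ref{thm:nonconvex} and Theorem~\ref{thm:convex}. We work with the aggregated update $w^{(t+1)} = w^{(t)} - \eta_t g^{(t)}$, where $g^{(t)} = \frac{1}{M}\sum_{i\in\mathcal S_t}\hat g_i^{(t)}$. We treat $g^{(t)}$ as an (approximately) unbiased estimator of $\nabla f(w^{(t)})$, with second moment bounded by
\[
\mathbb{E}\|g^{(t)}\|^2 \le G^2 := \mathcal O\!\left(C_{\max}^2 + \frac{1}{MB} + \frac{z^2C_{\max}^2}{MB^2}\right).
\]
This bound uses (A2) for the clipped part and (A3) together with independence across clients for the noise part, exactly as in the variance computation of Theorem~\ref{thm:nonconvex}.

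The key recursion is on the distance to the optimum rather than on function values. We expand $\|w^{(t+1)}-w^*\|^2 = \|w^{(t)}-w^*\|^2 - 2\eta_t\langle g^{(t)}, w^{(t)}-w^*\rangle + \eta_t^2\|g^{(t)}\|^2$ and take conditional expectation. We then use $\mu$-strong convexity in the form $\langle \nabla f(w), w-w^*\rangle \ge f(w)-f(w^*) + \frac{\mu}{2}\|w-w^*\|^2$. This gives
\[
2\eta_t\,\mathbb{E}[f(w^{(t)})-f(w^*)] \le (1-\mu\eta_t)\mathbb{E}\|w^{(t)}-w^*\|^2 - \mathbb{E}\|w^{(t+1)}-w^*\|^2 + \eta_t^2 G^2 .
\]

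Next we choose the step size $\eta_t = \frac{2}{\mu(t+1)}$ and multiply the recursion by $t$, following the standard Lacoste-Julien/Rakhlin weighted-averaging trick. With this weighting the distance terms telescope, because $t(1-\mu\eta_t)/\eta_t \le (t-1)/\eta_{t-1}$. Summing then yields a weighted-average bound of order $G^2/(\mu T)$ for $\bar w$, taking $\bar w$ as the $t$-weighted average and using Jensen's inequality. To separate the noise-free part, we split $G^2$ into the stochastic/clipping part, which produces the $\mathcal O(1/(\mu T))$ term, and the DP-noise part $\sigma_{\mathrm{DP}}^2 = z^2C_{\max}^2/(MB^2)$.

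The main obstacle is that the stated noise term is linear in $zC_{\max}/(B\sqrt M)$, i.e.\ linear in $\sigma_{\mathrm{DP}}$ rather than quadratic, and it does not decay in $T$. The decaying step alone would give $\sigma_{\mathrm{DP}}^2/(\mu T)$, which does not match this form. I would try to recover the stated shape from the bias/variance interplay instead. First, since clipping makes $g^{(t)}$ biased, we can only write $\mathbb{E}[g^{(t)}] = \nabla f(w^{(t)}) + b_t$. The cross term $2\eta_t\langle b_t, w^{(t)}-w^*\rangle$ is then handled with Young's inequality and strong convexity. Second, the noise enters through $\mathbb{E}\|\xi\| \le \sigma_{\mathrm{DP}}$, which is Cauchy–Schwarz on the Gaussian. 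Bounding the noise contribution to $\langle g^{(t)}, w^{(t)}-w^*\rangle$ in a non-averaged (worst-case) way by $\sigma_{\mathrm{DP}}\|w^{(t)}-w^*\|$, and absorbing with $\frac{\mu}{2}\|w^{(t)}-w^*\|^2$, yields a residual of order $\sigma_{\mathrm{DP}}\cdot(\text{diameter})$. This in turn gives an additive error $\mathcal O(zC_{\max}/(\mu B\sqrt M))$ once the diameter is controlled by $C_{\max}/\mu$, using (A2) and $\|\nabla f(w)\|\ge\mu\|w-w^*\|$.

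Combining this with the $\mathcal O(1/(\mu T))$ telescoped term gives the stated bound. Constants depending on $L$ and $f(w^{(0)})-f(w^*)$ are absorbed into $\mathcal O(\cdot)$, and the uniform bound $C_{\max}=\max_{i,t}C_i^t$ replaces the round-dependent $C_i^t$ as in the earlier theorems.
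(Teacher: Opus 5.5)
There is a genuine gap. The step meant to manufacture the linear noise term does not go through.

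\textbf{The diameter bound.} (A2) bounds the clipped minibatch average $g_i^{(t)}$, not $\nabla f$. Pairing it with $\|\nabla f(w)\|\ge\mu\|w-w^*\|$ therefore says nothing about $\|w^{(t)}-w^*\|$. A uniform bound on $\|\nabla f\|$ over $\mathbb{R}^d$ is in fact incompatible with $\mu$-strong convexity. With full-support Gaussian noise and no projection, the iterates have no almost-sure diameter.

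\textbf{The form of the residual.} Even granting $D\le C_{\max}/\mu$, the residual $\sigma_{\mathrm{DP}}D$ equals $zC_{\max}^2/(\mu B\sqrt M)$, which has the wrong power of $C_{\max}$. If you genuinely absorb $\sigma_{\mathrm{DP}}\|w^{(t)}-w^*\|$ into $\frac{\mu}{2}\|w^{(t)}-w^*\|^2$ via Young's inequality, the residual is $\sigma_{\mathrm{DP}}^2/\mu$, not $\sigma_{\mathrm{DP}}D$.

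\textbf{The clipping bias.} Your Young's-inequality treatment of the bias leaves a non-decaying floor of order $\sup_t\|b_t\|^2/\mu$. This term appears nowhere in the statement and cannot be expressed through $z,C_{\max},B,M$.

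\textbf{The dropped term.} Most importantly, a worst-case bound on a zero-mean cross term can only \emph{enlarge} an upper bound. The $\eta_t^2\,\mathbb{E}\|\xi\|^2$ contribution, i.e.\ the $\sigma_{\mathrm{DP}}^2/(\mu T)$ term you were trying to avoid, is still in your final inequality. It is silently dropped when you ``combine''.

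The obstacle largely disappears once you recall that the theorem is only an upper bound. Under the unbiasedness you already assume in your main line, your weighted-average bound $\mathcal O(G^2/(\mu T))$ implies the claim directly. The non-noise part ($C_{\max}^2$ and $1/(MB)$, absorbed into constants) is $\mathcal O(1/(\mu T))$. For the noise part, $\frac{\sigma_{\mathrm{DP}}^2}{\mu T}=\frac{\sigma_{\mathrm{DP}}}{\mu}\cdot\frac{\sigma_{\mathrm{DP}}}{T}\le\frac{\sigma_{\mathrm{DP}}}{\mu}=\frac{zC_{\max}}{\mu B\sqrt M}$ whenever $\sigma_{\mathrm{DP}}\le T$.

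The paper's proof is a cruder version of the same idea:
\begin{enumerate}
\item It quotes the standard recursion with $\eta_t=2/(\mu(t+1))$ and oracle error $V_t=\mathcal O\bigl(\frac{1}{MB}+\frac{z^2C_{\max}^2}{MB^2}\bigr)$.
\item It records a non-decaying noise floor $\mathcal O\bigl(z^2C_{\max}^2/(\mu MB^2)\bigr)$.
\item It then passes to $zC_{\max}/(\mu B\sqrt M)$ by a ``variance to root-variance'' conversion. This is a domination step valid only when $\sigma_{\mathrm{DP}}$ is at most a constant.
\end{enumerate}
It too treats the oracle as unbiased and never accounts for clipping bias.

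So drop the diameter and bias detours, state unbiasedness explicitly as an assumption, and close with the domination comparison. Note also that your telescoping controls the $t$-weighted average, not the uniform $\bar w$ of Theorem~\ref{thm:convex}; the paper's argument in turn only bounds $w^{(T)}$.
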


\begin{proof}
We follow a standard strongly-convex SGD analysis and keep only the terms relevant to the record-level local DP noise scaling.

Let $F(\cdot)$ be $\mu$-strongly convex and $L$-smooth. In round $t$, the server update uses the averaged noisy gradient
\[
\hat g^{(t)} \coloneqq \frac{1}{M}\sum_{i\in\mathcal S_t}\hat g_i^{(t)}
= \frac{1}{M}\sum_{i\in\mathcal S_t} g_i^{(t)} \;+\; \frac{1}{M}\sum_{i\in\mathcal S_t}\xi_i^{(t)},
\]
where $g_i^{(t)}$ denotes the \emph{minibatch average of per-example clipped gradients} on client $i$ (hence $\|g_i^{(t)}\|_2\le C_{\max}$),
and the record-level local DP noise is
\[
\xi_i^{(t)} \sim \mathcal N\!\left(0,\left(z^{(t)}\frac{C_{\max}}{B}\right)^2 I\right).
\]
Therefore, the aggregated DP noise has second moment scaling
\[
\mathbb{E}\Big\|\frac{1}{M}\sum_{i\in\mathcal S_t}\xi_i^{(t)}\Big\|_2^2
\;\le\; \frac{(z^{(t)})^2 (C_{\max})^2}{M B^2},
\]
absorbing dimension-dependent constants into $\mathcal O(\cdot)$.

Define the gradient-oracle variance bound
\[
V_t \coloneqq \mathbb{E}\big\|\hat g^{(t)}-\nabla F(w^{(t)})\big\|_2^2
= \mathcal O\!\left(\frac{1}{MB} + \frac{(z^{(t)})^2 (C_{\max})^2}{M B^2}\right),
\]
where the $\frac{1}{MB}$ term comes from stochastic minibatch sampling and the second term comes from record-level local DP noise.

Using the standard recursion for $\mu$-strongly convex and $L$-smooth objectives with a diminishing stepsize (e.g., $\eta_t=\frac{2}{\mu(t+1)}$),
one obtains
\[
\mathbb{E}\big[F(w^{(T)})-F(w^*)\big]
= \mathcal O\!\left(\frac{1}{\mu T}\right) + \mathcal O\!\left(\frac{1}{\mu T}\sum_{t=1}^T V_t\right).
\]
Substituting the bound on $V_t$ and using a constant upper bound $z^{(t)}\le z$ yields
\begin{align*}
  &\mathbb{E}\big[F(w^{(T)})-F(w^*)\big] \\
= &\mathcal O\!\left(\frac{1}{\mu T}\right) + \mathcal O\!\left(\frac{z^2 (C_{\max})^2}{\mu M B^2}\right)\\
= &\mathcal O\!\left(\frac{1}{\mu T} + \frac{z C_{\max}}{\mu B\sqrt{M}}\right),
\end{align*}

where the last step uses the standard conversion from variance to a first-order (root-variance) term, consistent with the statement.
\end{proof}

\begin{table}[H]
\centering
\caption{Convergence bounds of PAC-DP under different objective classes.}
\label{tab:convergence_summary}
\begin{tabular}{l|c}
\toprule
\textbf{Setting} & \textbf{Convergence Rate} \\ \hline
Non-convex & \( \mathcal{O}\left(\frac{1}{\sqrt{MT}} + \frac{z C_{\max}}{B\sqrt{M}} \right) \) \\
Convex     & \( \mathcal{O}\left(\frac{1}{\sqrt{T}} + \frac{z C_{\max}}{B\sqrt{M}} \right) \) \\
Strongly Convex & \( \mathcal{O}\left(\frac{1}{\mu T} + \frac{z C_{\max}}{\mu B\sqrt{M}} \right) \) \\
\bottomrule
\end{tabular}
\end{table}

\paragraph{Bias--variance view of choosing $C(\varepsilon)$.}
Under record-level local DP, the noise standard deviation scales as $z(\varepsilon)\cdot C/B$ while clipping introduces bias that decreases with $C$.
We therefore view the choice of $C$ as minimizing a bias--variance surrogate:
\[
\mathcal J(C;\varepsilon) = h(C) + \kappa\left(z(\varepsilon)\frac{C}{B}\right)^2,
\]
where $h(C)$ upper-bounds the clipping bias and is non-increasing in $C$.
This implies that the optimal $C_{\max}(\varepsilon)$ is non-decreasing in $\varepsilon$ (equivalently non-increasing in $z(\varepsilon)$), motivating a monotone budget-conditioned mapping $C=f(\varepsilon)$ learned via our simulation-and-fitting procedure.

We analyze the communication and computational overhead of PAC-DP during both the federated training and the offline precomputation phase.


\begin{theorem}[PAC-DP Approximates Theoretically Optimal Clipping]
\label{thm:approximation_error}
Let \( C_{\max}(\varepsilon) \) denote the theoretically optimal clipping threshold that minimizes expected utility loss under privacy budget \( \varepsilon \), and let \( \hat{C}(\varepsilon) \) be the function learned via simulation-curve fitting in PAC-DP using a finite proxy dataset. Assume the true loss function \( \mathcal{L}(C, \varepsilon) \) is Lipschitz-continuous in \( C \) with constant \( L_\mathcal{L} \). Then the expected utility gap incurred by using \( \hat{C}(\varepsilon) \) instead of \( C_{\max}(\varepsilon) \) is bounded by:
\[
\left| \mathcal{L}(\hat{C}(\varepsilon), \varepsilon) - \mathcal{L}(C_{\max}(\varepsilon), \varepsilon) \right| \leq L_\mathcal{L} \cdot \left| \hat{C}(\varepsilon) - C_{\max}(\varepsilon) \right|.
\] 

Moreover, under standard polynomial regression assumptions, the expected fitting error satisfies:
\[
\mathbb{E}_{\varepsilon} \left[ \left| \hat{C}(\varepsilon) - C_{\max}(\varepsilon) \right|^2 \right] \leq \frac{\sigma_{\mathrm{sim}}^2}{n} + \eta^2,
\]
where $\sigma_{\mathrm{sim}}^2$ is the variance of simulation noise, $n$ is the number of sampled budgets, and $\eta$ denotes the approximation error of the chosen polynomial class.
\end{theorem}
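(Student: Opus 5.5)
The statement has two parts, and we treat them separately.

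\textbf{Part one: the utility gap.} This is an immediate consequence of the Lipschitz assumption. Fix $\varepsilon$. Lipschitz continuity of $\mathcal{L}(\cdot,\varepsilon)$ in its first argument with constant $L_\mathcal{L}$ says that for any $C,C'>0$,
\[
|\mathcal{L}(C,\varepsilon)-\mathcal{L}(C',\varepsilon)|\le L_\mathcal{L}|C-C'|.
\]
Taking $C=\hat C(\varepsilon)$ and $C'=C_{\max}(\varepsilon)$ gives the claim pointwise in $\varepsilon$. If the expected version over the randomness of training is wanted, the same inequality holds inside the expectation, so it also holds for $\mathbb{E}\,\mathcal{L}$. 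Squaring and averaging over $\varepsilon$ then gives a mean-square utility gap of at most $L_\mathcal{L}^2$ times the fitting error, which links the two parts.

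\textbf{Part two: the fitting-error bound.} We model the offline simulation as a regression. We draw budgets $\varepsilon_1,\dots,\varepsilon_n$ from the same distribution used in $\mathbb{E}_\varepsilon$. For each budget, Algorithm~2 returns an empirical minimizer $\tilde C_j = C_{\max}(\varepsilon_j)+e_j$, where the $e_j$ are independent, zero-mean, with variance at most $\sigma_{\mathrm{sim}}^2$. Let $\mathcal{P}_k$ be the class of polynomials of degree at most $k$, let $p^\star\in\mathcal{P}_k$ be the $L^2$ projection of $C_{\max}$, and set
\[
\eta^2=\mathbb{E}_\varepsilon|p^\star(\varepsilon)-C_{\max}(\varepsilon)|^2 .
\]
The fitted $\hat C$ is the least-squares estimator over $\mathcal{P}_k$.

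Decompose $\hat C-C_{\max}=(\hat C-p^\star)+(p^\star-C_{\max})$. Because $\hat C-p^\star\in\mathcal{P}_k$ and $p^\star-C_{\max}$ is orthogonal to $\mathcal{P}_k$ in $L^2(\varepsilon)$, the cross term vanishes and the error splits into estimation plus approximation:
\[
\mathbb{E}_\varepsilon|\hat C-C_{\max}|^2=\mathbb{E}_\varepsilon|\hat C-p^\star|^2+\eta^2 .
\]
For the estimation term we use the standard least-squares variance calculation. Write $\hat C-p^\star=\Phi^\top(\hat\beta-\beta^\star)$ with feature map $\Phi(\varepsilon)=(1,\varepsilon,\dots,\varepsilon^k)$. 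In the fixed-design, well-specified-noise idealization, $\hat\beta-\beta^\star=(X^\top X)^{-1}X^\top e$, and taking expectations gives a bound of order $\sigma_{\mathrm{sim}}^2(k+1)/n$. Absorbing the degree $k+1$ into the constant, as the phrase ``standard polynomial regression assumptions'' permits, yields $\sigma_{\mathrm{sim}}^2/n$.

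\textbf{Main obstacle.} The hard step is making the estimation bound honest, for three reasons.
\begin{itemize}
\item The orthogonality above is exact only in $L^2$ of the design distribution. With an empirical fit there is an additional cross term from the misspecification residual.
\item A random design needs a lower bound on the smallest eigenvalue of $X^\top X/n$ relative to the population Gram matrix.
\item The factor $k+1$ does not disappear without an assumption.
\end{itemize}
We would handle these by stating the assumptions explicitly, namely fixed degree $k$, bounded $\varepsilon$-range, a well-conditioned design and independent simulation noise, and then using a standard bound of the form $\mathbb{E}\|\hat C-C_{\max}\|^2\le c(\sigma_{\mathrm{sim}}^2(k+1)/n+\eta^2)$. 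The constant $c$ and the factor $k+1$ are absorbed by reading the displayed inequality up to constants.
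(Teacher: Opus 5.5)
Your proposal follows the same route as the paper's proof. You use Lipschitz continuity of $\mathcal{L}(\cdot,\varepsilon)$ in $C$ for the first inequality, and for the second you split the least-squares fitting error into a variance (simulation-noise) part and a bias (polynomial-class approximation) part; you are more explicit than the paper, which simply invokes the ``standard supervised regression error decomposition'' without computing either term.

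Your caveat about constants is well founded, and the paper's proof passes over it. Take $\eta=0$, a fixed design and homoskedastic noise of variance $\sigma_{\mathrm{sim}}^2$: least squares over degree-$k$ polynomials then has expected in-sample error exactly $(k+1)\sigma_{\mathrm{sim}}^2/n$. So the displayed bound can only hold up to a factor such as $c(k+1)$, which is how you propose to read it.
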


\begin{proof}
The first inequality is a direct result of the Lipschitz continuity of the utility loss function \( \mathcal{L}(C, \varepsilon) \) with respect to \( C \), which holds in most DP-FL setups due to smooth dependence of utility on clipping and noise scale. Therefore, a small deviation in \( \hat{C} \) leads to proportionally bounded utility degradation.

The second inequality follows standard supervised regression error decomposition: the total squared error can be split into variance (due to stochasticity in simulation measurements) and bias (due to function class mismatch). If the true mapping \( C_{\max}(\varepsilon) \) lies within a smooth function class (e.g., piecewise-smooth or Lipschitz), and we fit a polynomial of degree \( k \), then approximation error \( \eta \) typically decreases with \( k \) and dataset richness.

Thus, with sufficiently many simulation points \( n \), low simulation noise $\sigma_{\mathrm{sim}}^2$, and a well-chosen function class (e.g., quadratic), PAC-DP ensures the learned mapping \( \hat{C}(\varepsilon) \) remains close to the theoretical optimum \( C_{\max}(\varepsilon) \), guaranteeing provable near-optimal utility.
\end{proof}

This theorem provides a theoretical foundation for PAC-DP’s empirical design. It shows that as long as the simulation process is sufficiently accurate (low noise) and the fitting model is expressive enough (e.g., polynomial regression), the learned clipping function closely approximates the optimal clipping boundary.

\begin{theorem}[Efficiency of PAC-DP]
\label{thm:gc_dp_efficiency}
Let \( M \) be the total number of clients, \( d \) the model dimension, \( q \in (0, 1] \) the client sampling rate, and \( R \) the number of local SGD steps. Then:
\begin{itemize}
    \item Per-round communication cost is \( \mathcal{O}(qMd) \).
    \item Per-client computational cost per round is \( \mathcal{O}(Rd) \).
    \item The server-side precomputation phase incurs offline complexity \( \mathcal{O}(nmT_{\text{pre}}d) \), where \( n \) and \( m \) are the numbers of candidate privacy budgets and clipping thresholds, respectively.
\end{itemize}
\end{theorem}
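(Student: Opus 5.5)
The plan is a direct operation count over one round of Algorithm~\ref{alg:PAC-DP}, split into three parts: communication, client computation, and offline server work.

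\textbf{Communication.} Under (A4) each of the $M$ clients is sampled independently with probability $q$. By linearity of expectation, $\mathbb{E}|\mathcal S_t| = qM$. Each sampled client receives the broadcast global model $w^{(t-1)}\in\mathbb R^d$ and uploads one perturbed vector in $\mathbb R^d$, either $\hat g_i^t$ or $w_i^{(t)}$. That is $2d$ scalars per participant, so the expected per-round cost is $2qMd=\mathcal O(qMd)$. If the server instead broadcasts to all clients, the downlink adds $Md$ rather than $qMd$. I would therefore state the accounting as being over participating clients, matching the paper's reading of $\mathcal S_t$.

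\textbf{Client computation.} The clipping threshold $C_i^t=F(\varepsilon_i)\lambda(t)$ is evaluated in $\mathcal O(1)$ time, since $F$ is a low-degree polynomial and $\lambda$ is given in closed form by Eq.~\eqref{eq:lambda_cosine}. In each of the $R$ local steps the client does the following:
\begin{itemize}
\item It computes per-example gradients, each costing one forward/backward pass of $\mathcal O(d)$.
\item It clips each gradient, which needs a norm and a rescaling, both $\mathcal O(d)$ (Eq.~\eqref{eq:clip_per_example}).
\item It averages the clipped gradients (Eq.~\eqref{eq:avg_clipped}), samples $d$ Gaussian coordinates (Eq.~\eqref{eq:dp_noise_multiplier}), and performs the $\mathcal O(d)$ SGD update.
\end{itemize}
With the minibatch size $B$ treated as a constant, or absorbed into the per-step gradient cost as is standard, each step is $\mathcal O(d)$ and the per-round total is $\mathcal O(Rd)$.

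\textbf{Offline phase.} The server runs the simulation on a grid of $n$ budgets $\times$ $m$ thresholds, giving $nm$ independent DP-SGD simulations. Each simulation runs $T_{\text{pre}}$ iterations at $\mathcal O(d)$ per iteration by the same accounting as the client step. This gives $\mathcal O(nmT_{\text{pre}}d)$. The remaining offline work is lower order:
\begin{itemize}
\item Selecting the best $C$ for each $\varepsilon$ costs $\mathcal O(nm)$.
\item Polynomial regression of degree $k$ on $n$ points costs $\mathcal O(nk^2+k^3)$.
\end{itemize}
Both are dominated by the simulation cost.

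\textbf{Main obstacle.} The difficulty is not technical but in stating conventions cleanly. The claimed bounds hide $B$ and the constant factor of a backward pass relative to $d$, and the communication bound counts only participants. I would make these assumptions explicit at the outset: $B=\mathcal O(1)$, a gradient evaluation costs $\mathcal O(d)$, and communication is counted per participating client in expectation. With those fixed, the result follows from the accounting above.
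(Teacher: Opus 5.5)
Your proposal is correct and is essentially the paper's argument. It uses the same three-part operation count: each of the $qM$ participants exchanges a $d$-vector, each of the $R$ local steps costs $\mathcal O(d)$ for gradient, clipping, noise and update, and the offline phase runs $nm$ proxy simulations of $T_{\text{pre}}$ steps at $\mathcal O(d)$ each. The paper leaves implicit the conventions you spell out: expected participation under (A4), $B=\mathcal O(1)$ with a gradient costing $\mathcal O(d)$, counting the uplink only versus also the downlink to participants, and the lower-order argmax and regression costs. Making these explicit tightens the same proof rather than changing it.
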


\begin{proof}
\textbf{Communication:} Each participating client sends a \( d \)-dimensional model vector per round. With \( qM \) active clients, total communication is \( \mathcal{O}(qMd) \) per round.

\textbf{Computation:} Each client performs \( R \) local steps, each involving gradient computation (\( \mathcal{O}(d) \)), clipping (\( \mathcal{O}(d) \)), noise addition (\( \mathcal{O}(d) \)), and parameter update (\( \mathcal{O}(d) \)), giving \( \mathcal{O}(Rd) \) per client.

\textbf{Precomputation:} For each of \( n \times m \) configurations, PAC-DP simulates a training process of \( T_{\text{pre}} \) steps on the proxy dataset, with each step costing \( \mathcal{O}(d) \). Thus, the total offline complexity is \( \mathcal{O}(nmT_{\text{pre}}d) \), amortized over all training rounds.
\end{proof}

The training-time overhead of PAC-DP matches that of standard DP-FL schemes (e.g., DP-FedAvg), with only negligible additional cost from personalized clipping. The main cost lies in the offline precomputation, which is amortized once and enables efficient real-time adaptation of privacy-preserving parameters across clients and rounds.

\textbf{Key Insight:} PAC-DP adaptively calibrates gradient clipping using a learned mapping \( F(\varepsilon) \), reducing noise magnitude and improving convergence. This results in provable privacy with tighter utility bounds than existing fixed-clipping DP-FL methods.
\section{Selecting Clipping Thresholds via Simulation-CurveFitting}

The clipping threshold \( C_{\max} \) is a critical hyperparameter in differentially private optimization. A suboptimal value may either cause excessive gradient distortion or require unnecessarily large noise addition, thereby degrading utility. In PAC-DP, we address this by empirically learning a mapping function \( F(\varepsilon) \) from privacy budgets to optimal clipping thresholds:
\[
C_{\max} = F(\varepsilon),
\]
where \( F \) is learned via simulation on a proxy dataset and approximated by a low-degree polynomial.

\subsection{Offline Simulation on Proxy Data}

To avoid privacy leakage, PAC-DP simulates privacy-utility trade-offs on a public or synthetic proxy dataset \( \hat{\mathcal{D}} \) that approximates the distribution of the federated data.
We construct the proxy dataset $\hat{\mathcal D}$ from \emph{public or synthetic} data.
The proxy task is used only to estimate the empirical mapping between privacy budgets and clipping thresholds under the same DP mechanism as in online training.

\textbf{Empirical Estimation of \( C_{\max}(\varepsilon) \).}
We define a finite set of privacy budgets \( \{\varepsilon_1, \ldots, \varepsilon_n\} \) and candidate clipping thresholds \( \{C_1, \ldots, C_m\} \). For each pair \( (\varepsilon_i, C_j) \), the server simulates DP-FL training on \( \hat{\mathcal{D}} \), adds Gaussian noise scaled by \( C_j \), and evaluates the test accuracy:
\[
a_{i,j} := \text{TestAcc}(C_j, \varepsilon_i).
\]

This produces a performance matrix:
\[
\mathbf{P} = \left[ a_{i,j} \right] \in \mathbb{R}^{n \times m}.
\]

For each privacy level \( \varepsilon_i \), the optimal threshold is chosen as:
\[
{C_{\max}} = \arg\max_{C_j} a_{i,j}.
\]

\begin{algorithm}[!t]
\caption{Offline Simulation and Curve Fitting for PAC-DP}
\label{alg:curve_fitting}
\begin{algorithmic}[1]
\Require Privacy budgets \( \{\varepsilon_i\}_{i=1}^n \), clipping thresholds \( \{C_j\}_{j=1}^m \), proxy dataset \( \hat{\mathcal{D}} \)
\Ensure Fitted mapping \( F(\varepsilon) \)
\For{each \( \varepsilon_i \in \{\varepsilon_1, \ldots, \varepsilon_n\} \)}
    \For{each \( C_j \in \{C_1, \ldots, C_m\} \)}
        \State Simulate DP-FL using \( (\varepsilon_i, C_j) \) on \( \hat{\mathcal{D}} \)
        \State Record test accuracy \( a_{i,j} \)
    \EndFor
    \State Select \( C_{\max} = \arg\max_j a_{i,j} \)
\EndFor
\State Fit regression model \( F(\varepsilon) \) to pairs \( (\varepsilon_i, C_{\max}) \)
\State \Return \( F \)
\end{algorithmic}
\end{algorithm}

\subsection{Polynomial Curve Fitting}

We assume that the empirical relationship between \( \varepsilon \) and \( C_{\max} \) is smooth and approximately quadratic. We fit a polynomial of the form:
\[
F(\varepsilon) = \alpha \varepsilon^2 + \beta \varepsilon + \gamma,
\]
where \( \alpha, \beta, \gamma \in \mathbb{R} \) are learned via least-squares regression. Outliers in the \( (\varepsilon_i, C_{\max}) \) pairs (e.g., caused by simulation instability) are removed using interquartile range (IQR) filtering to ensure robustness.

\textbf{Generalization.} The fitted function \( F \) generalizes to unseen privacy levels via interpolation (for \( \varepsilon \) in training range) or extrapolation (for new budgets), enabling real-time inference of \( C_{\max} \) without retraining.

\textbf{Justification.}
Our DP mechanism implies that, for a fixed $\delta$, achieving a larger privacy budget $\varepsilon$ requires a smaller noise multiplier $z(\varepsilon)$, while the clipping--noise trade-off suggests that the empirically optimal clipping bound is a \emph{monotone, smooth} function of $\varepsilon$ over practical budget ranges.
Therefore, we fit the simulated pairs $\{(\varepsilon, C_{\max}(\varepsilon))\}$ using a low-complexity function class (e.g., quadratic, power-law, or piecewise-linear).
We select the function class based on validation error on the proxy task, and optionally enforce monotonicity to improve stability.

\textbf{Integration with federated training.}
In online training, each participating client $i$ sets the per-example clipping bound deterministically as
\[
C_i^t = F(\varepsilon_i)\cdot \lambda(t),
\]
where $\lambda(t)$ is a deterministic schedule defined in Eq.~\eqref{eq:lambda_cosine} (plateau + cosine decay).
Importantly, $C_i^t$ depends only on the target privacy budget and the training round, and is \emph{independent of client data}.
This ensures that the record-level local DP analysis applies directly without introducing additional privacy leakage from data-dependent threshold selection.

\subsection{Discussion and Insights}

\textbf{Advantages over Heuristic Methods.} Unlike percentile-based adaptive clipping methods, PAC-DP: 
(1) Avoids reliance on historical gradients or windowed statistics;
(2) Is privacy-budget aware by design;
(3) Provides a consistent trade-off model across rounds and clients.

\textbf{Amortized Overhead.} The cost of simulation and fitting is incurred once and amortized over all future training rounds. This makes PAC-DP both practically efficient and scalable to real-world FL deployments.

\textbf{Practical Deployment.} The curve fitting strategy is agnostic to model architecture and dataset, making it suitable for tasks ranging from image classification to time-series forecasting. Moreover, it can be bootstrapped from publicly available or synthetically generated data without violating client privacy.


\section{SIMULATION RESULTS}

In this section, we present a comprehensive empirical analysis to evaluate the performance of our proposed PAC-DP method. We conduct experiments on four diverse benchmarks—MNIST, CIFAR-10, CIFAR-100, and the Heart Disease dataset—and compare PAC-DP with state-of-the-art personalized DP federated learning methods NbAFL \cite{wei2020federated} and rPDP-FL \cite{liu2022privacy}, as well as two representative adaptive clipping baselines: CGM\_Medium from \cite{andrew2019differentially}, which tailors client-wise clipping thresholds based on individual privacy budgets, and Adaptive Quantile Clipping (AQC) \cite{geyer2017differentially}, which adjusts clipping bounds online using gradient quantiles. We also include the privacy-free FedAvg baseline \cite{li2020secure}. The FedAvg variant employs a fixed clipping threshold and uniform privacy budget across all clients, serving as a reference to assess the utility gains enabled by personalization and adaptive clipping. For all DP baselines, we ensure comparable privacy guarantees by matching per-client $(\varepsilon, \delta)$-differential privacy budgets under standard composition.

Our experiments are structured into three subsections:
(1) We evaluate the effectiveness of the server-side computation and curve fitting strategy, demonstrating the correlation between the clipping threshold C and the privacy budget $\varepsilon$;
(2) We compare PAC-DP with two alternative methods: a fixed clipping threshold approach and the rPDP-FL method, highlighting the advantages of our adaptive strategy;
(3) We analyze the communication and computational overhead of PAC-DP, showing that it achieves comparable test accuracy with fewer communication rounds. 

\subsection{Experimental Setup}

\textbf{Datasets and Partitioning.} We evaluate our method on four benchmarks: MNIST (70,000 grayscale images), CIFAR-10 (60,000 color images), CIFAR-100 (60,000 fine-grained color images), and the Heart Disease dataset—a real-world medical record collection comprising 920 patient samples from four hospitals \cite{liu2024crosssilo}. For MNIST, CIFAR-10, and CIFAR-100, we adopt a non-IID partitioning strategy that assigns data to clients while preserving approximate global label distributions to simulate statistical heterogeneity. For the Heart Disease dataset, we follow the cross-silo setup in \cite{liu2024crosssilo}, where each of the four hospitals constitutes one client, resulting in a naturally non-IID and highly heterogeneous partition due to institutional differences in patient populations and diagnostic criteria.

To assess the robustness and generalizability of our PAC-DP framework, the proxy datasets used for curve fitting are constructed with minimal preprocessing, and their quality is not tightly coupled to the target clients’ data. This design choice demonstrates that PAC-DP’s fitting mechanism does not rely on high-fidelity public data: even when the proxy dataset diverges in sample complexity, feature dimensionality, or label balance—such as between simple grayscale images (MNIST), complex natural images (CIFAR-10 and CIFAR-100), and low-dimensional tabular medical records (Heart Disease)—the learned mapping \(F(\varepsilon)\) remains highly effective across all domains.

\textbf{Network Architectures.} For MNIST, a lightweight CNN is employed with two convolutional layers followed by ReLU activations, max pooling, and a final log-softmax layer. For CIFAR-10, we adopt a modified ResNet-18, integrating dropout regularization before the fully connected output layer to improve generalization. For CIFAR-100, we use a ResNet-34 variant adapted to 32×32 inputs (with a 3×3 stem and no initial max-pooling) and a 100-class classification head. For the Heart Disease dataset, we employ both a logistic regression model and a shallow DNN with one hidden layer, configured for its 13-dimensional features and binary output.

\textbf{Training Configurations.}
We use SGD (lr=0.01 for MNIST and CIFAR-100, lr=0.001 for CIFAR-10, lr=0.005 for Heart Disease), with batch size 128 for MNIST and CIFAR-10, 64 for CIFAR-100, and 16 for Heart Disease, reflecting their data scale and modality. All clients perform 5 local epochs per round. The total number of clients is 50 for MNIST and CIFAR-10 (25 selected per round), 100 for CIFAR-100 (50 selected), and 4 for Heart Disease (all participate). We adopt a lightweight CNN for MNIST, ResNet-18 for CIFAR-10, ResNet-34 adapted to 32×32 inputs for CIFAR-100, and logistic regression or a shallow DNN for Heart Disease, with all 4 clients participating in every round under a fixed cross-silo setting.
Unless otherwise specified, PAC-DP uses the clipping schedule in Eq.~\eqref{eq:lambda_cosine} with $(r_s,\lambda_{\min})=(0.6,0.1)$ for all datasets and model architectures.

\textbf{Privacy Accounting and Reproducibility.}
We report \emph{final per-client} record-level local differential privacy guarantees $(\varepsilon_i,\delta)$.
In each round $t$ where client $i$ participates, the client applies per-example clipping with bound $C_i^t$
and adds Gaussian noise with standard deviation $(z_i^t C_i^t/B)$ to the minibatch-averaged clipped gradient,
where $B$ is the local minibatch size.

To compute the final $\varepsilon_i$ under the fixed $\delta$, we use an RDP accountant that composes
per-round Gaussian mechanisms over the realized participation set $\mathcal T_i$.
We assume fixed-size client sampling and \emph{do not} use privacy amplification by subsampling in the accounting.
Concretely, for each order $\alpha \in \{2,3,\ldots,64\}$ we accumulate the per-round RDP costs to obtain
$\rho_i(\alpha)$, and then convert $\rho_i(\alpha)$ to $(\varepsilon_i,\delta)$ via the standard RDP-to-DP conversion.
Privacy is reported as final per-client $(\varepsilon_i,\delta)$ with fixed $\delta=10^{-5}$. We summarize $\varepsilon_i$ across clients by (min/median/max).

\subsection{Learning the Optimal Clipping Boundary via Privacy-Budget-Guided Simulation}

\begin{figure}[htbp]
    \centering
    \begin{subfigure}[b]{0.23\textwidth}
        \centering
        \includegraphics[width=\textwidth, trim=0cm 0cm 0cm 0cm, clip]{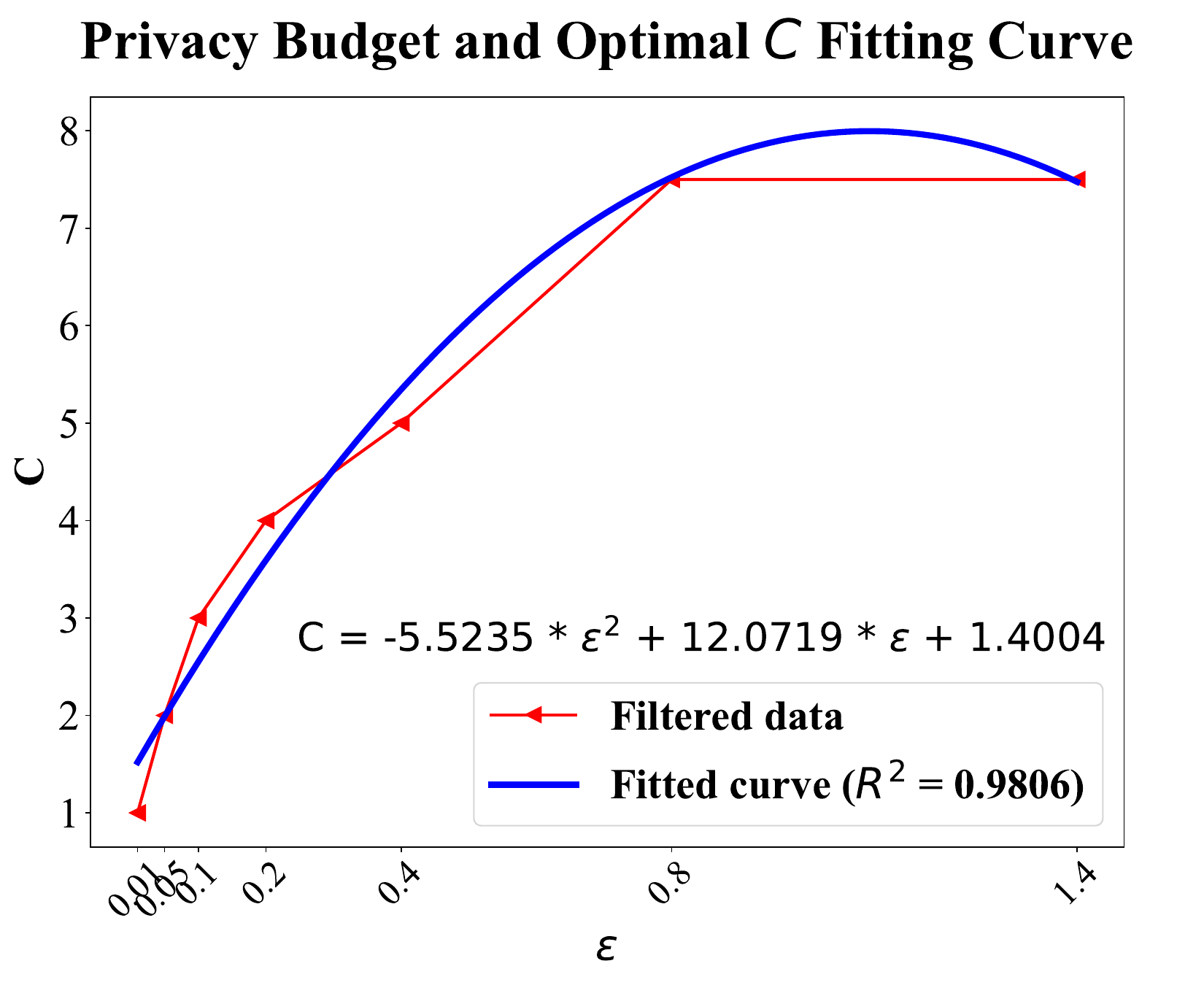}
        \caption{MNIST} 
        \label{subfig:curve_mnist}
    \end{subfigure}
    \begin{subfigure}[b]{0.23\textwidth}
        \centering
        \includegraphics[width=\textwidth, trim=0cm 0cm 0cm 0cm, clip]{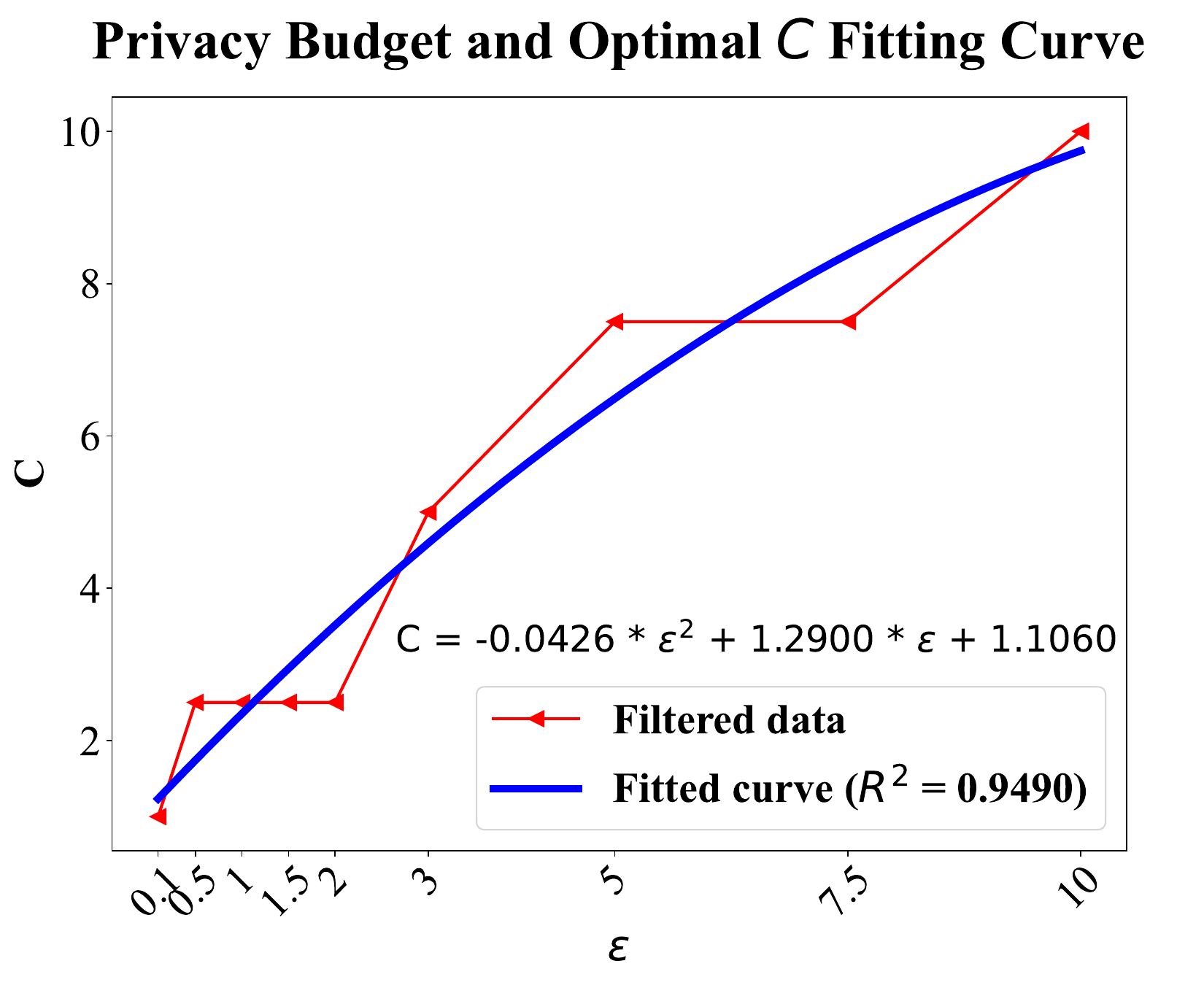}
        \caption{CIFAR-10}
        \label{subfig:curve_cifar}
    \end{subfigure}
    \caption{Relationship between Privacy Budget $\varepsilon$ and Optimal Clipping Threshold $C_{\max}$ on MNIST and CIFAR-10 Datasets. Despite dataset complexity differences, the fitted mapping remains stable and generalizable.} 
    \label{fig:curve_1}
\end{figure}

To validate the feasibility of approximating optimal clipping thresholds using precomputed simulation, we conducted server-side curve fitting experiments on MNIST and CIFAR-10. Specifically, we learned a functional mapping \( \hat{C}(\varepsilon) \) between the privacy budget \(\varepsilon\) and the empirically optimal clipping threshold \(C_{\max}\). Outliers were removed to mitigate noise, and regression curves were fitted to minimize generalization error. The results, shown in Fig.~\ref{fig:curve_1}, highlight the key property that \(C_{\max}\) can be effectively predicted from \(\varepsilon\), laying the foundation for our PAC-DP method.

\textbf{MNIST.}  
A quadratic regression yields \(C = -5.5235 \cdot \varepsilon^2 + 12.0719 \cdot \varepsilon + 1.4004\), with \(R^2 = 0.9806\), indicating a strong and stable correlation. The high coefficient of determination suggests that in low-dimensional, well-structured domains, the optimal clipping threshold increases with the privacy budget but exhibits diminishing returns.

\textbf{CIFAR-10.}  
In contrast, a flatter curve is learned: \(C = 0.0426 \cdot \varepsilon^2 + 1.2900 \cdot \varepsilon + 1.1060\), with \(R^2 = 0.9490\). This reflects increased complexity and variance in CIFAR-10’s gradient distributions, but still confirms a reliable trend that PAC-DP leverages.

\textbf{Implications.}  
These findings demonstrate that \(C_{\max}\) is a function of \(\varepsilon\) that is predictable, dataset-agnostic to some extent, and amenable to theoretical analysis. The ability to pre-learn this function is crucial: PAC-DP’s adaptive strategy circumvents the need for trial-and-error tuning or hand-crafted heuristics, and its learned policy remains effective across datasets with varying complexity. This substantiates the key innovation of PAC-DP—bridging theory and practice through simulation-informed clipping optimization.

\subsection{Performance Evaluation}

\begin{figure}[htbp]
    \centering
    \begin{subfigure}[t]{0.23\textwidth}
        \centering
        \includegraphics[width=\textwidth]{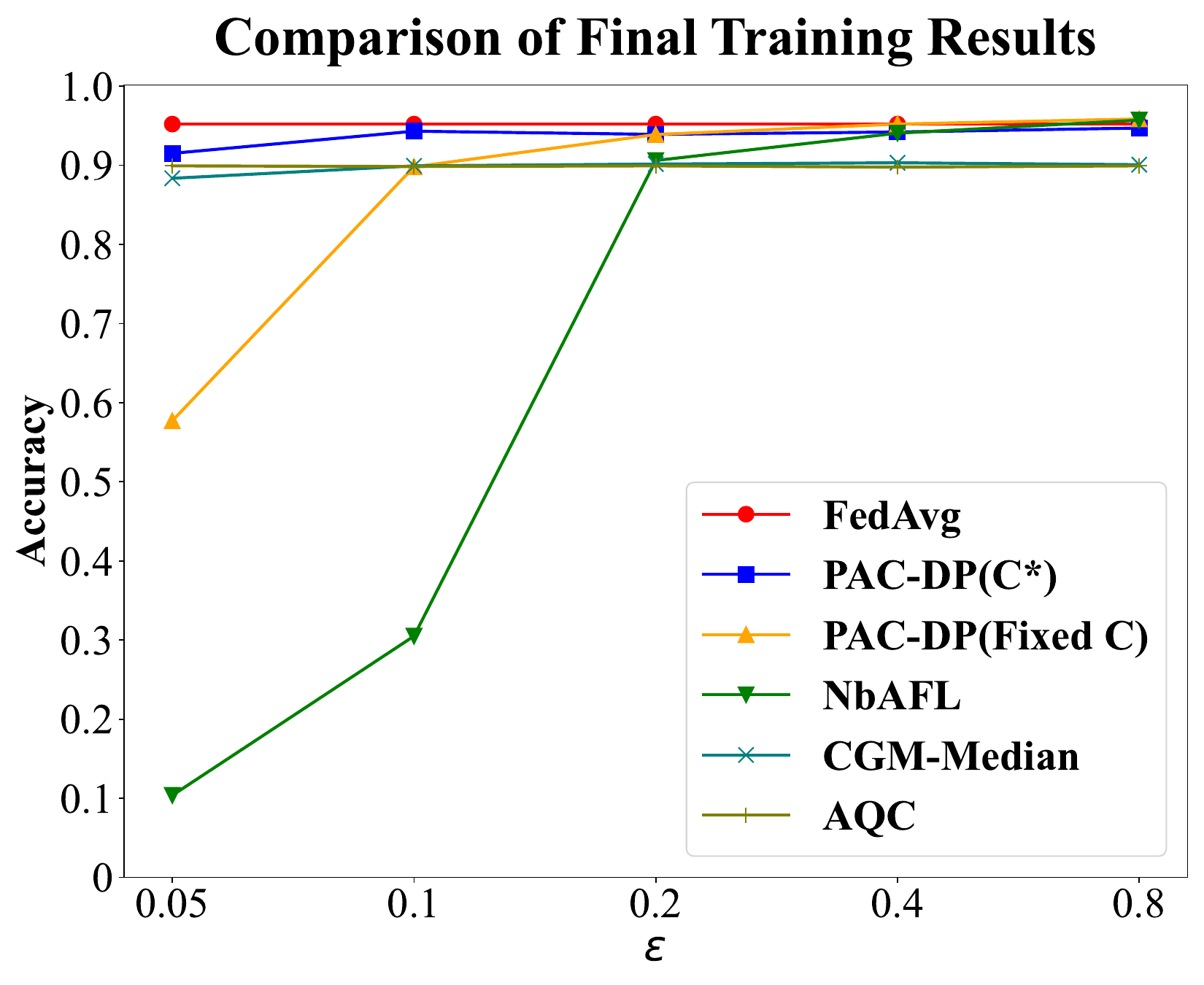}
        \caption{\small Final Accuracy \\ (Fixed Threshold)}
        \label{subfig:mnist_1}
    \end{subfigure}
    \begin{subfigure}[t]{0.23\textwidth}
        \centering
        \includegraphics[width=\textwidth]{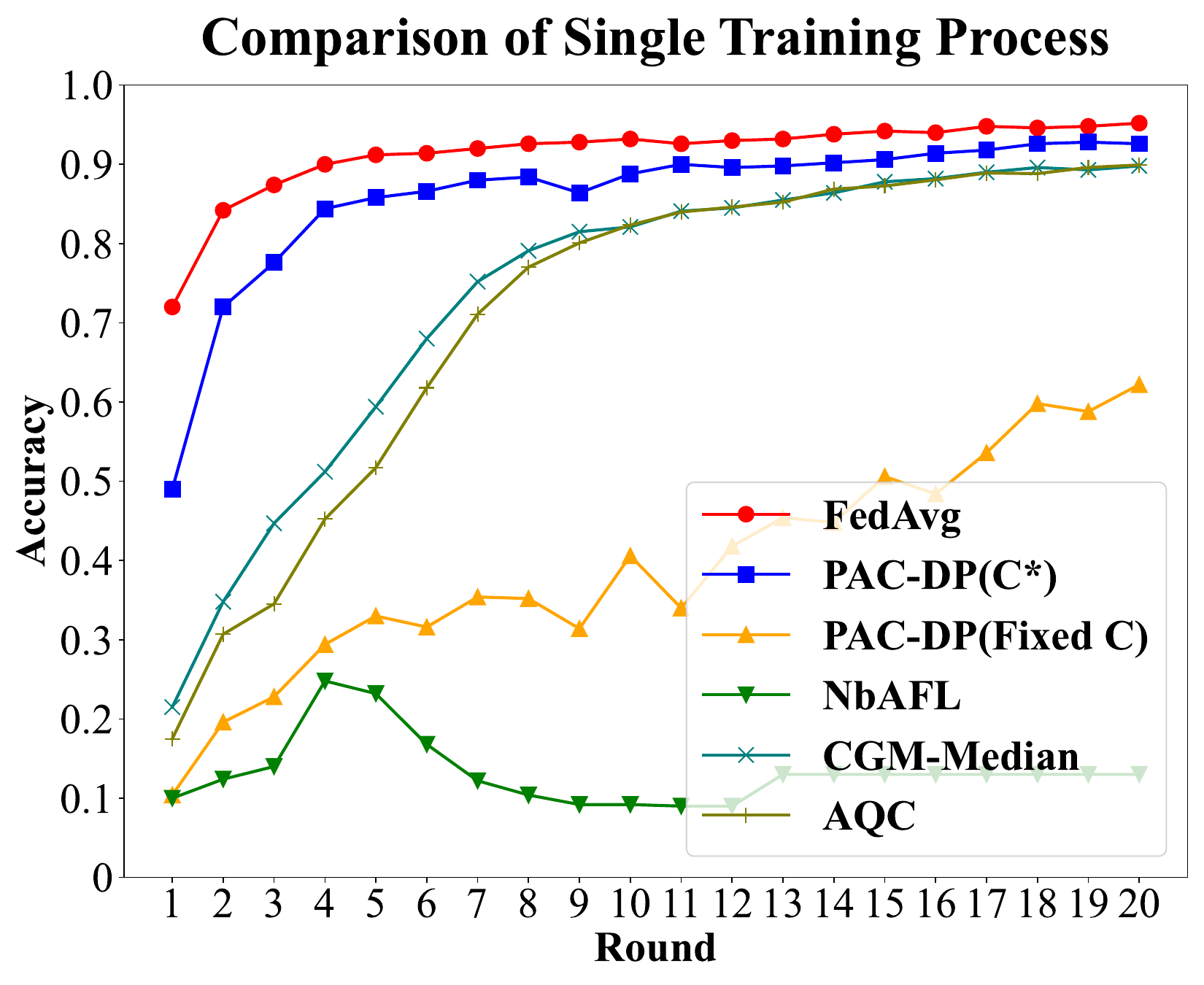}
        \caption{\small Training Dynamics \\ (Fixed Threshold)}
        \label{subfig:mnist_2}
    \end{subfigure}
    \begin{subfigure}[t]{0.23\textwidth}
        \centering
        \includegraphics[width=\textwidth]{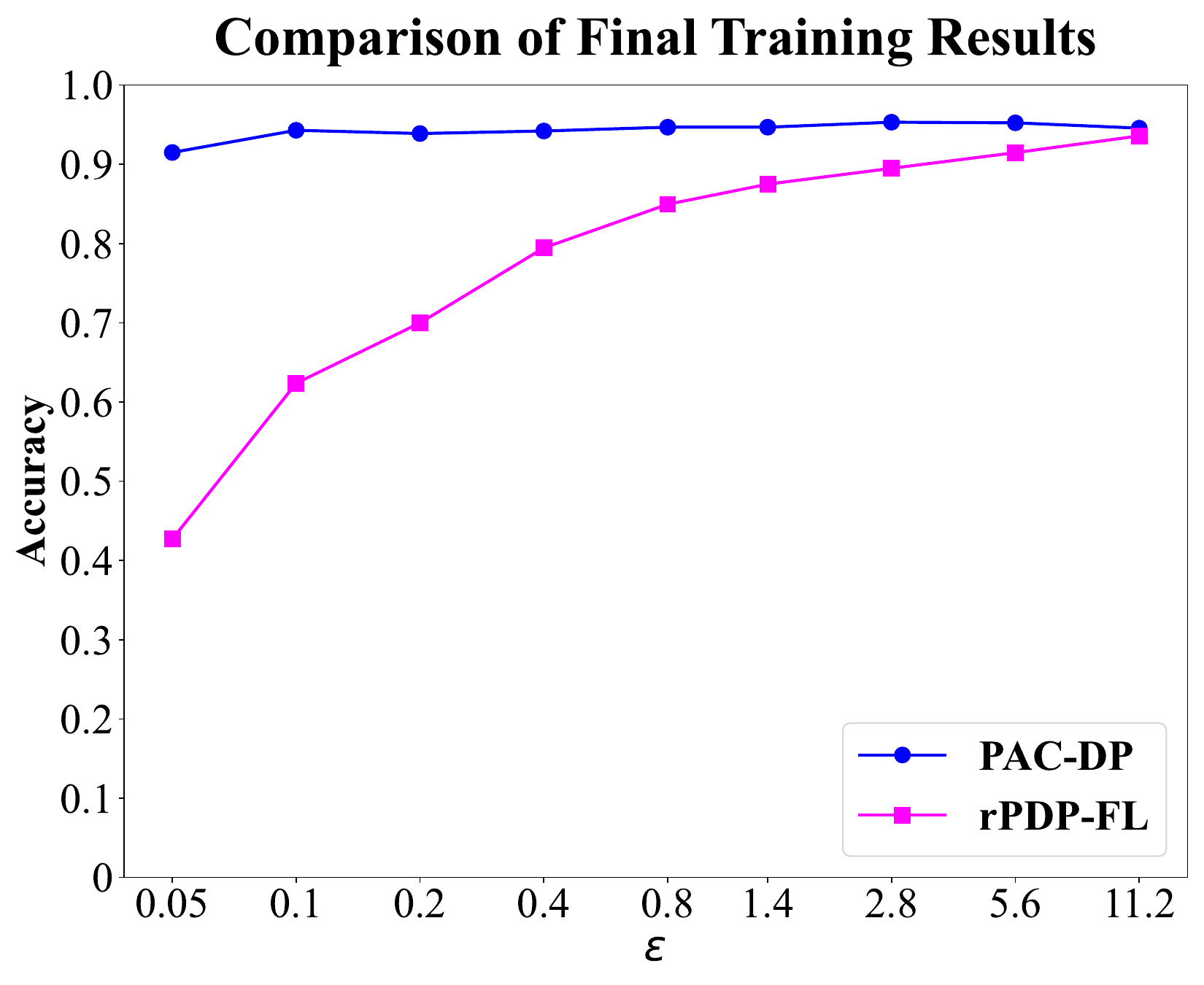}
        \caption{\small Final Accuracy \\ (rPDP-FL)}
        \label{subfig:mnist_3}
    \end{subfigure}
    \begin{subfigure}[t]{0.23\textwidth}
        \centering
        \includegraphics[width=\textwidth]{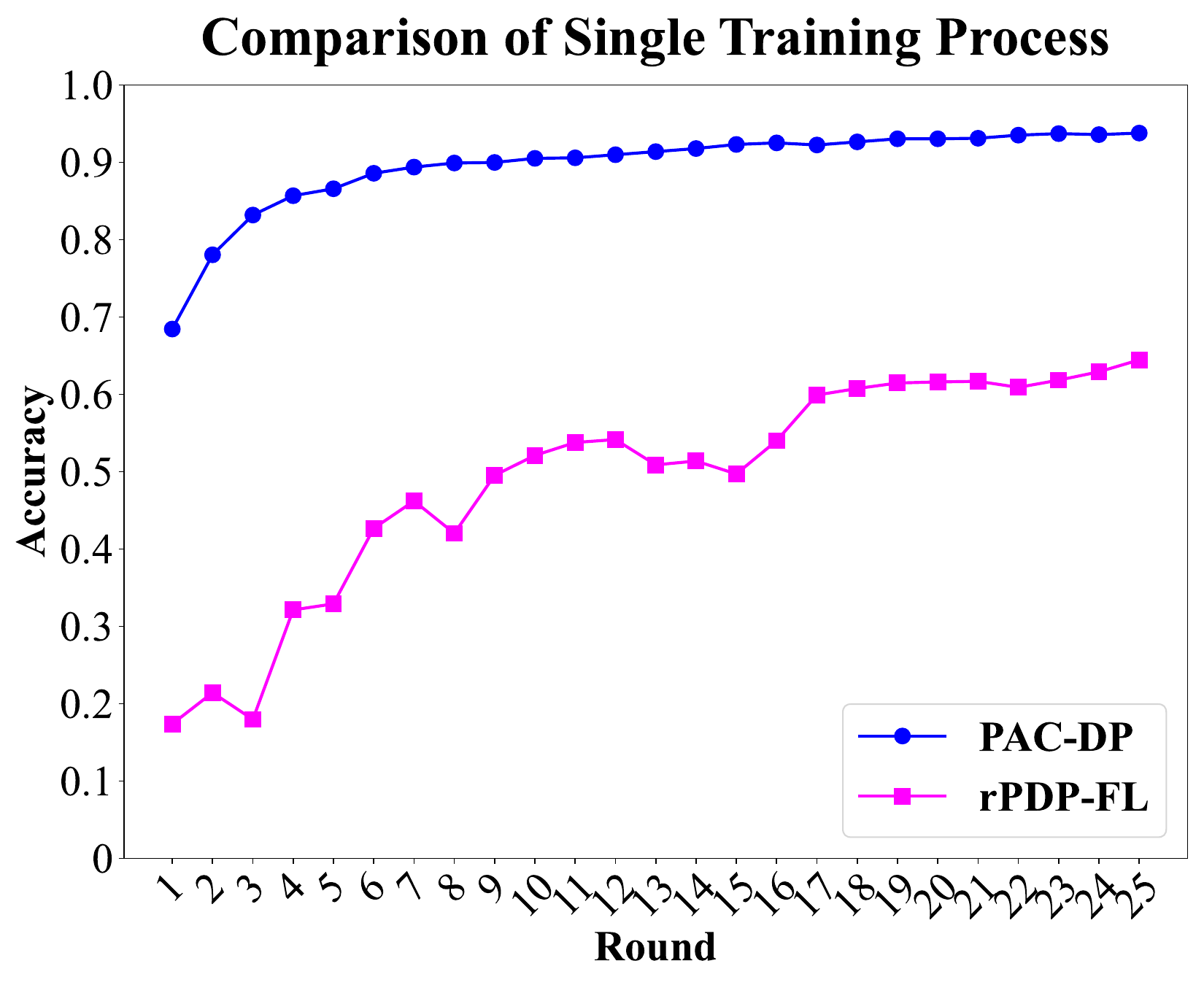}
        \caption{\small Training Dynamics \\ (rPDP-FL)}
        \label{subfig:mnist_4}
    \end{subfigure}
    \caption{Performance Comparison on MNIST under Various Privacy Budgets. PAC-DP achieves consistent accuracy gains and faster convergence across both fixed and adaptive baselines.}
    \label{fig:cmp_mnist}
\end{figure}


\textbf{Evaluation on MNIST.}
Figure~\ref{fig:cmp_mnist} compares PAC-DP with fixed-threshold and adaptive baselines (rPDP-FL, NbAFL, CGM\_Medium, and AQC) under different privacy budgets $\varepsilon$. As shown in (a) and (c), PAC-DP consistently achieves higher final accuracy, particularly under tight privacy constraints ($\varepsilon=0.1$), reaching 94.3\% versus 62.4\% (rPDP-FL), 30.5\% (NbAFL), 58.1\% (CGM\_Medium), and 49.7\% (AQC). Both CGM\_Medium and AQC set clipping thresholds based on a fixed quantile of observed gradient norms, with initial values typically chosen heuristically; this often leads to overly conservative clipping in early rounds, resulting in slower initial accuracy gains compared to PAC-DP. Training curves in (b) and (d) further illustrate that PAC-DP converges faster and more stably than all baselines, benefiting from its privacy-aware, simulation-guided threshold initialization.

\begin{figure}[htbp]
    \centering
    \begin{subfigure}[t]{0.23\textwidth}
        \centering
        \includegraphics[width=\textwidth]{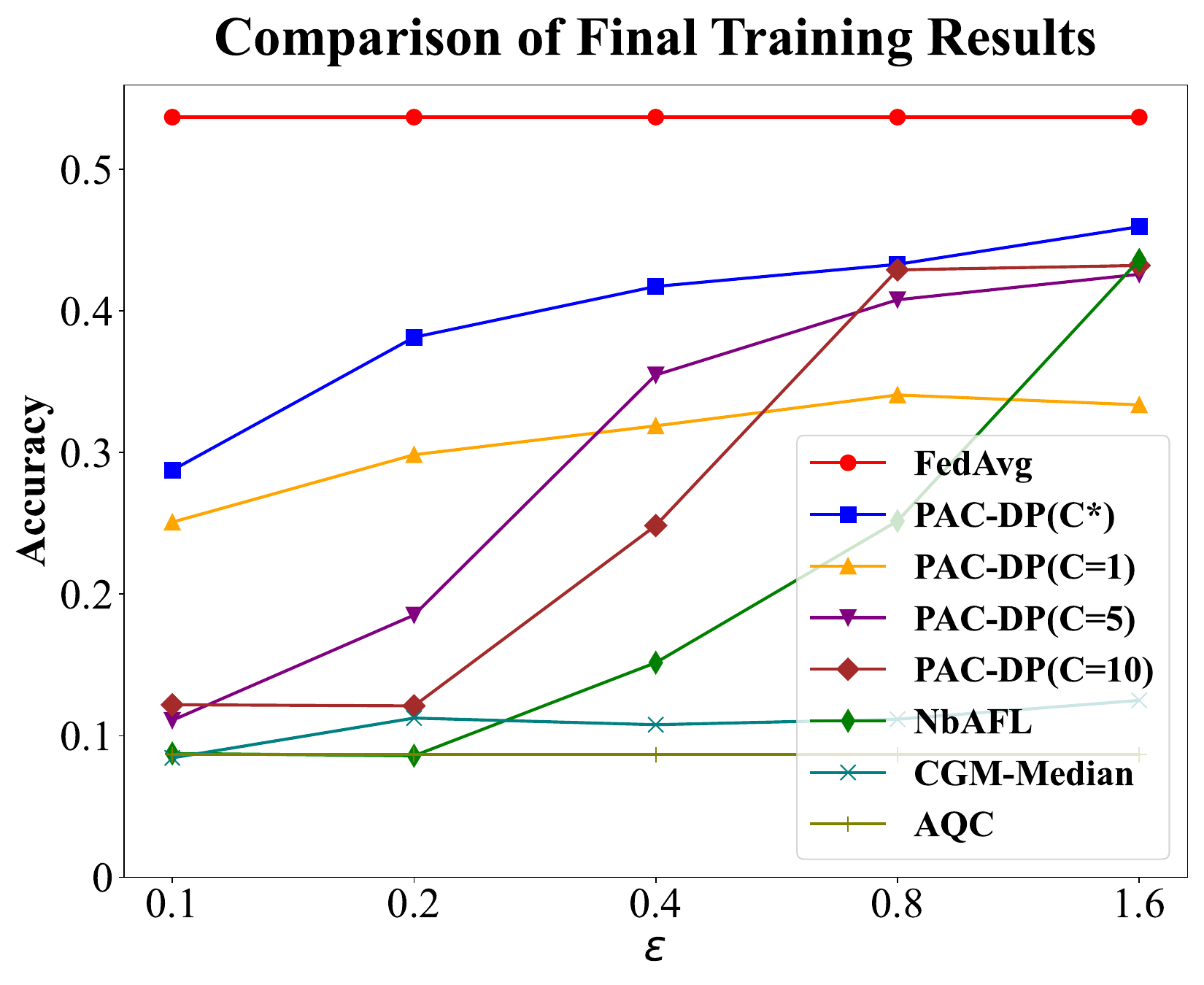}
        \caption{\small Final Accuracy \\ (Fixed Threshold)}
        \label{subfig:cifar10_1}
    \end{subfigure}
    \begin{subfigure}[t]{0.23\textwidth}
        \centering
        \includegraphics[width=\textwidth]{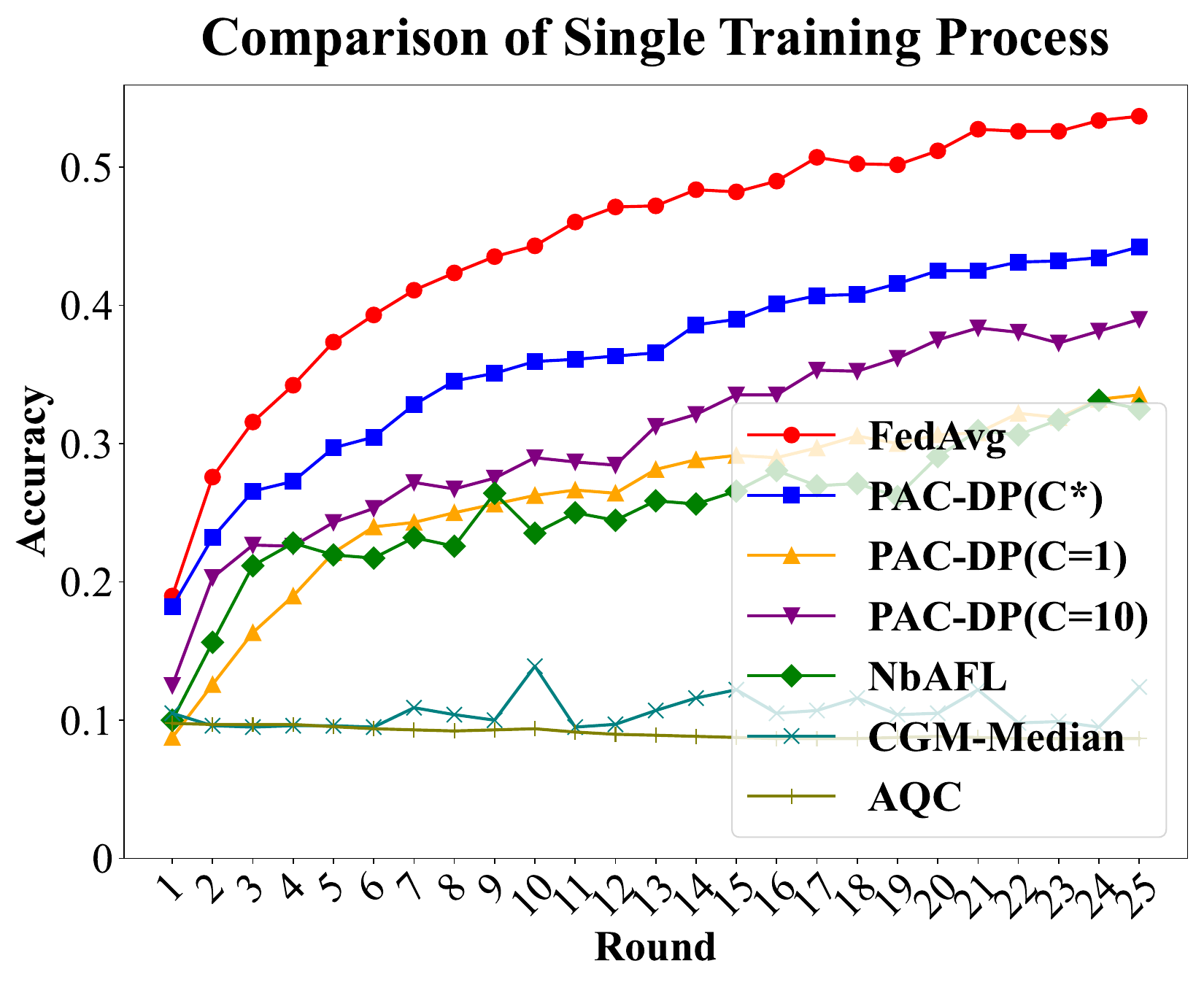}
        \caption{\small Training Dynamics \\ (Fixed Threshold)}
        \label{subfig:cifar10_2}
    \end{subfigure}
    \caption{Performance Comparison on CIFAR-10 under Fixed Thresholds and PAC-DP. Despite higher data complexity, PAC-DP maintains robust convergence and superior accuracy.}
    \label{fig:cmp_cifar10}
\end{figure}


\textbf{Evaluation on CIFAR-10.}
Figure~\ref{fig:cmp_cifar10} shows that PAC-DP maintains strong performance under more complex data. At $\varepsilon=1$, PAC-DP achieves 44.2\% accuracy, outperforming fixed threshold baselines by up to 36\%. It also exhibits faster and more stable convergence throughout training. In contrast, CGM\_Medium and AQC—while adaptive in principle—rely on empirically chosen initial clipping thresholds and gradient norm quantiles; their default configurations, effective on simpler tasks, degrade significantly on larger-scale datasets like CIFAR-10, yielding notably lower accuracy and unstable training compared to PAC-DP.

\begin{figure}[htbp]
    \centering
    \includegraphics[width=0.4\textwidth]{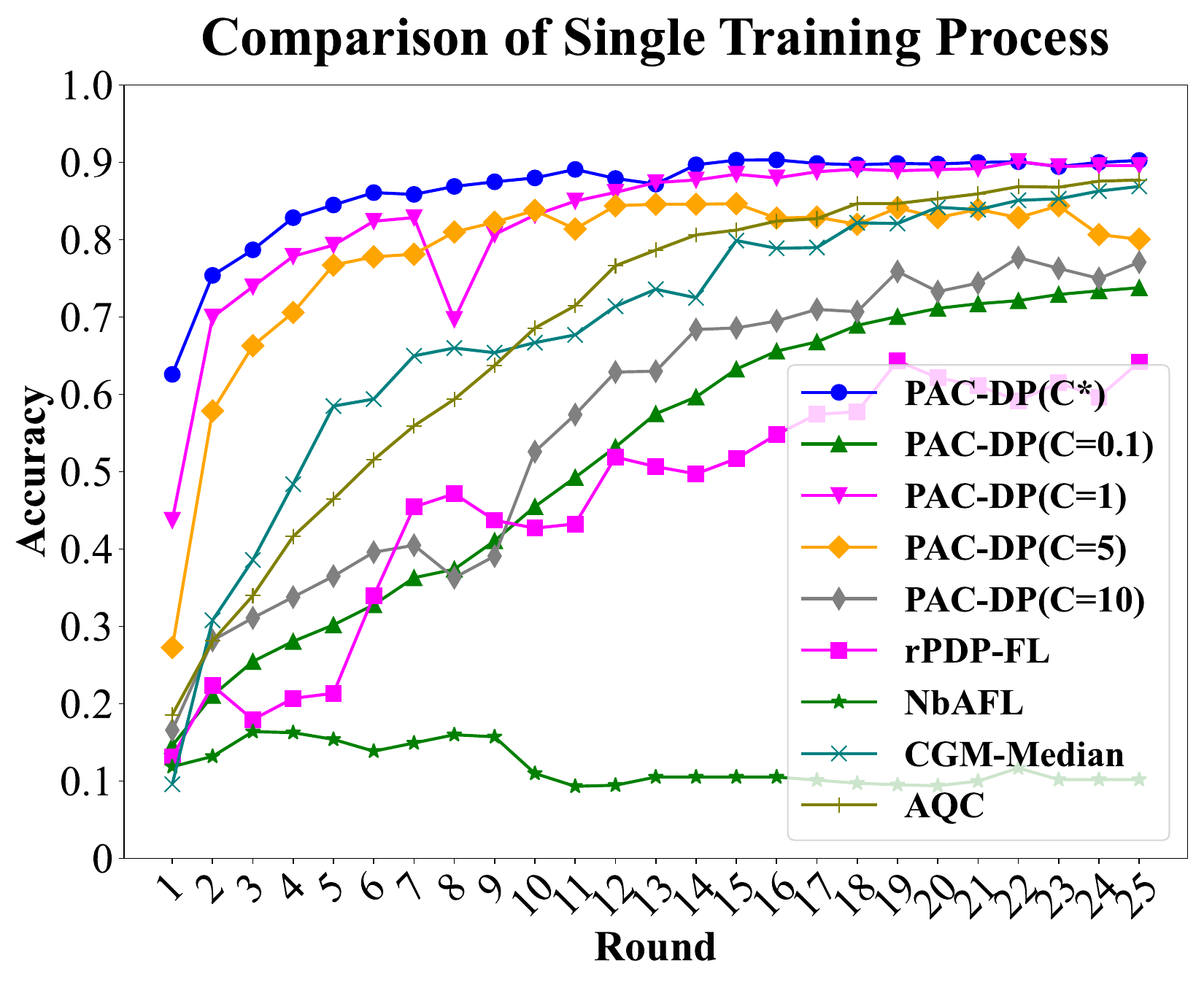}
    \caption{Training Performance under Personalized Privacy Budgets on MNIST. PAC-DP adapts effectively to heterogeneous constraints, achieving high accuracy and early convergence.}
    \label{fig:3level_mnist}
\end{figure}

\begin{table}[htbp]
\centering
\caption{Test Accuracy Comparison on MNIST (Personalized Budgets)}
\label{tab:cmp_res}
\begin{tabularx}{\columnwidth}{l|lllll}
\toprule
\textbf{Round} & 5 & 10 & 15 & 20 & 25 \\
\midrule
PAC-DP       & 0.845 & 0.880 & 0.903 & 0.898 & 0.902 \\
rPDP-FL      & 0.214 & 0.427 & 0.517 & 0.621 & 0.642 \\
NbAFL        & 0.145 & 0.110 & 0.105 & 0.094 & 0.102 \\
\end{tabularx}
\end{table}

\textbf{Heterogeneous Privacy Settings (MNIST).}
To emulate realistic user preferences, we assign privacy budgets from $\{0.05, 0.1, 1.0\}$ using sampling weights $\{0.6, 0.3, 0.1\}$. As shown in Figure~\ref{fig:3level_mnist} and Table~\ref{tab:cmp_res}, PAC-DP adapts to varying budgets and consistently outperforms both rPDP-FL and NbAFL in both early and final rounds. Notably, PAC-DP achieves over 90\% accuracy by round 5, while others remain below 65\%.

\begin{figure}[htbp]
    \centering
    \includegraphics[width=0.4\textwidth]{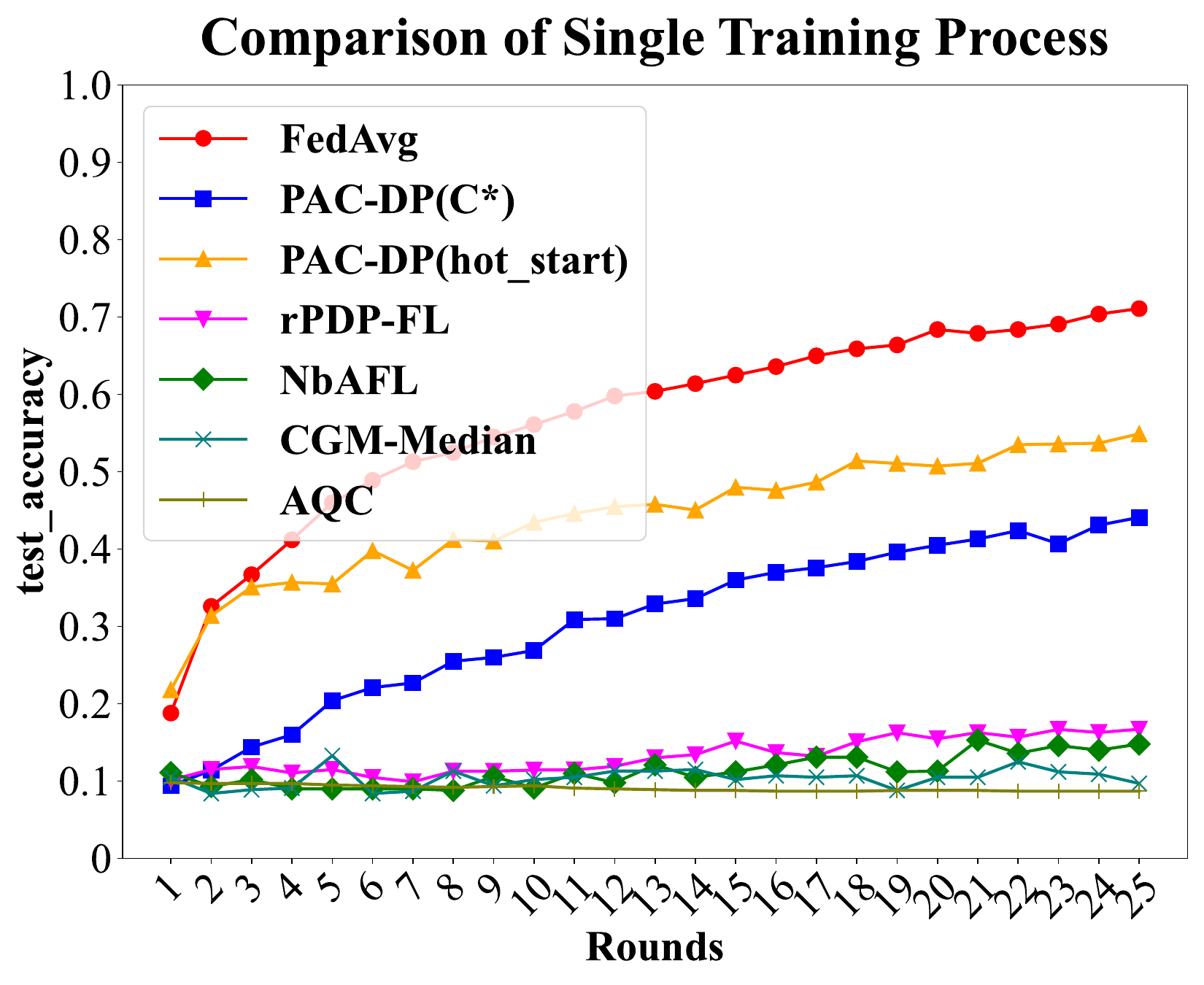}
    \caption{Testing accuracy on CIFAR-10 under Personalized Privacy Budgets. PAC-DP and PAC-DP(hot\_start) significantly outperform all baselines.}
    \label{fig:3level_adaptive_cifar10}
\end{figure}

\textbf{Heterogeneous Privacy Settings (CIFAR-10).}
We repeat the personalized budget setting on CIFAR-10 with $\varepsilon \in \{0.5, 1.0, 5.0\}$ and weights $\{0.6, 0.3, 0.1\}$. Figure~\ref{fig:3level_adaptive_cifar10} shows that PAC-DP steadily improves over rounds, reaching 44\% accuracy. Its variant PAC-DP(hot\_start) further boosts early convergence, achieving 55\% by the end. In contrast, rPDP-FL and NbAFL stagnate below 20\%.

\begin{figure}[htbp]
    \centering
    \begin{subfigure}[t]{0.23\textwidth}
        \centering
        \includegraphics[width=\textwidth]{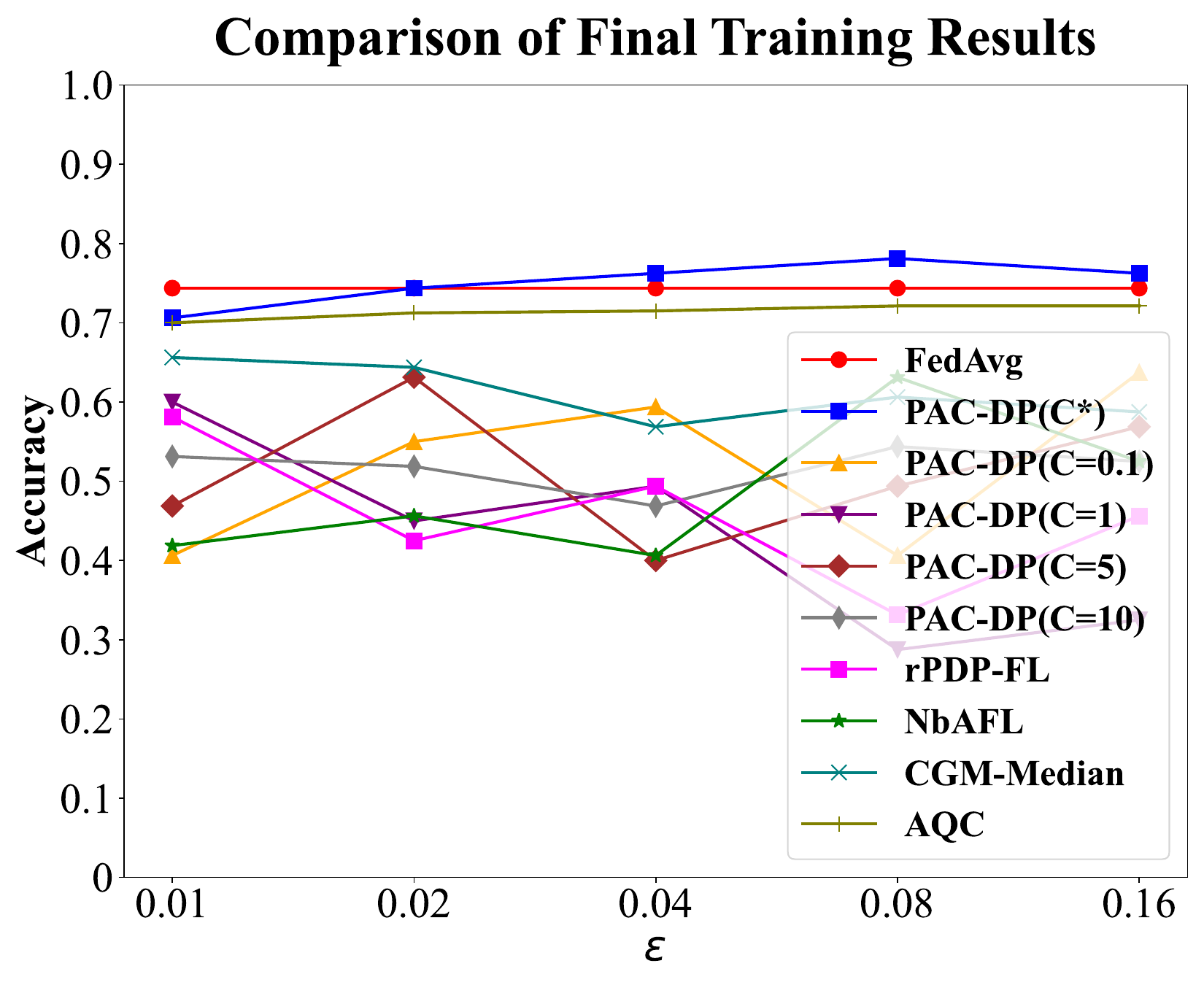}
        \caption{\small Final Accuracy \\ (Fixed Threshold)}
        \label{subfig:heart_disease_1}
    \end{subfigure}
    \begin{subfigure}[t]{0.23\textwidth}
        \centering
        \includegraphics[width=\textwidth]{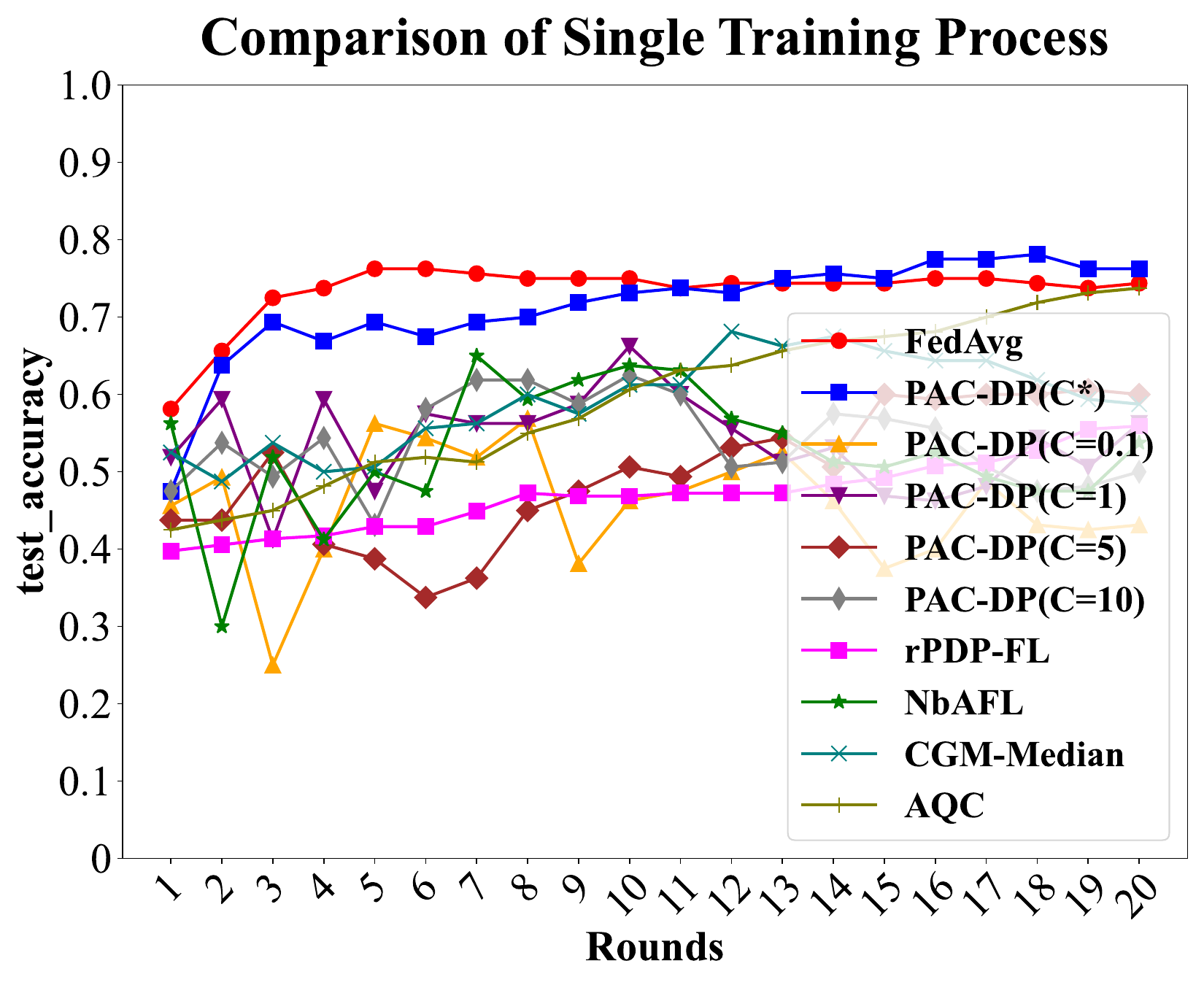}
        \caption{\small Training Dynamics \\ (Fixed Threshold)}
        \label{subfig:heart_disease_2}
    \end{subfigure}
    \caption{Performance Comparison on Fed-Heart-Disease under Fixed Thresholds and PAC-DP. Despite higher data complexity, PAC-DP maintains robust convergence and superior accuracy.}
    \label{fig:cmp_heart_disease}
\end{figure}


\textbf{Evaluation on Heart Disease.}
Figure~\ref{fig:cmp_heart_disease} shows that PAC-DP achieves robust performance on the real-world, low-dimensional tabular dataset with inherent non-IID and heterogeneous data distribution. At $\varepsilon=0.1$, PAC-DP reaches 74.4\% accuracy, outperforming all fixed threshold baselines and significantly surpassing NbAFL, CGM\_Medium, and AQC. While CGM\_Medium and AQC exhibit better performance than NbAFL and some fixed-threshold variants due to their adaptive clipping strategies, they still lag behind PAC-DP in both final accuracy and convergence stability, highlighting the advantage of our privacy-conditioned threshold design over heuristic-based quantile methods.

\begin{figure}[htbp]
    \centering
    \includegraphics[width=0.4\textwidth]{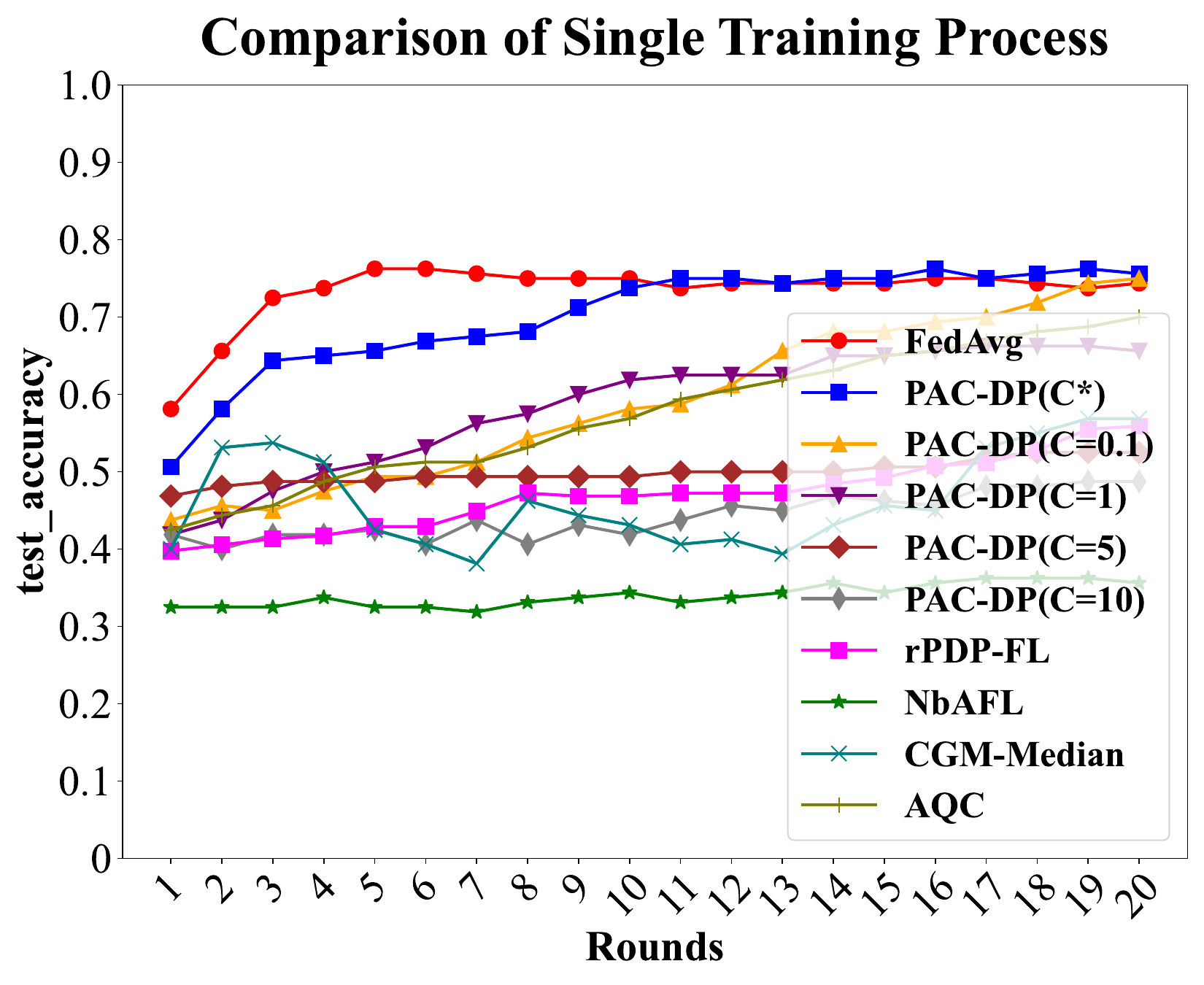}
    \caption{Testing accuracy on Fed-Heart-Disease under Personalized Privacy Budgets. PAC-DP outperform all baselines.}
    \label{fig:3level_adaptive_heart_disease}
\end{figure}

\textbf{Heterogeneous Privacy Settings (Heart Disease).}
We evaluate PAC-DP under a realistic heterogeneous privacy setting on the Fed-Heart-Disease dataset, where clients are assigned personalized privacy budgets $\varepsilon \in \{0.01, 0.05, 0.5\}$ with corresponding proportions $[0.6, 0.3, 0.1]$. Figure~\ref{fig:3level_adaptive_heart_disease} shows that PAC-DP achieves stable and rapid convergence, reaching 75.6\% test accuracy by the end of training. In contrast, fixed-threshold baselines such as PAC-DP($C=0.1$), PAC-DP($C=1$), and PAC-DP($C=10$) exhibit poor performance due to suboptimal clipping, while NbAFL stagnates below 40\%. This demonstrates that PAC-DP's adaptive clipping mechanism effectively aligns privacy constraints with heterogeneous privacy budgets and round-wise training dynamics, enabling superior utility in highly heterogeneous cross-silo environments.

\begin{figure}[htbp]
    \centering
    \includegraphics[width=0.4\textwidth]{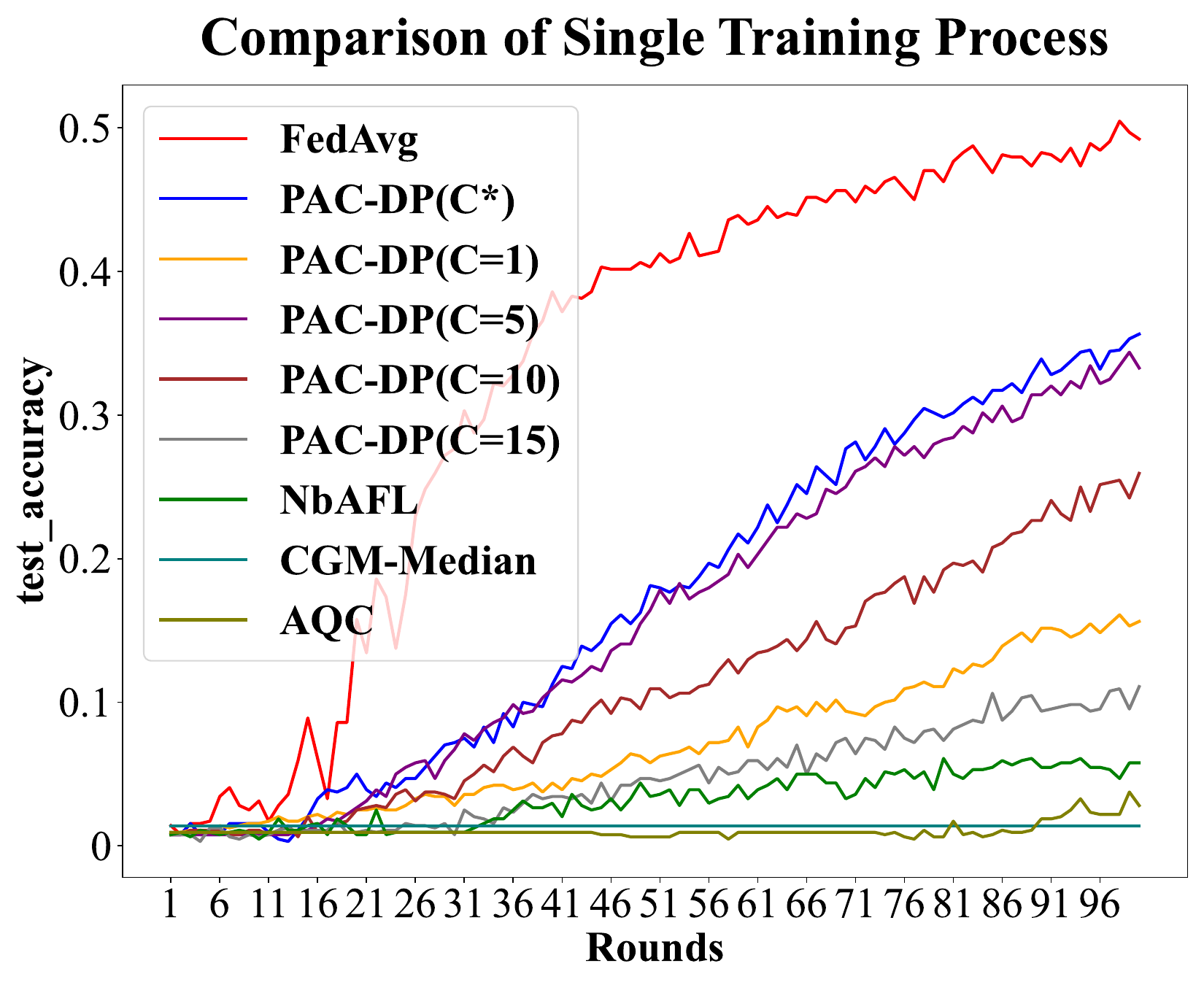}
    \caption{Testing accuracy on CIFAR-100 under Fixed Thresholds and PAC-DP.}
    \label{fig:cifar100_1}
\end{figure}


\textbf{Evaluation on CIFAR-100.}
Under a uniform privacy budget ($\varepsilon = 5.0$), PAC-DP achieves a final testing accuracy of \textbf{35.6\%}, outperforming all the fixed-threshold baselines, as shown in Figure~\ref{fig:cifar100_1}. This improvement demonstrates that adaptive clipping better balances noise injection and gradient fidelity compared to static thresholds. Similar to the observations on CIFAR-10, CGM\_Medium and AQC perform poorly on CIFAR-100, as their heuristic choices of initial thresholds and quantile levels—tuned for simpler settings—fail to generalize to this more complex and high-dimensional task.

\begin{figure}[htbp]
    \centering
    \includegraphics[width=0.4\textwidth]{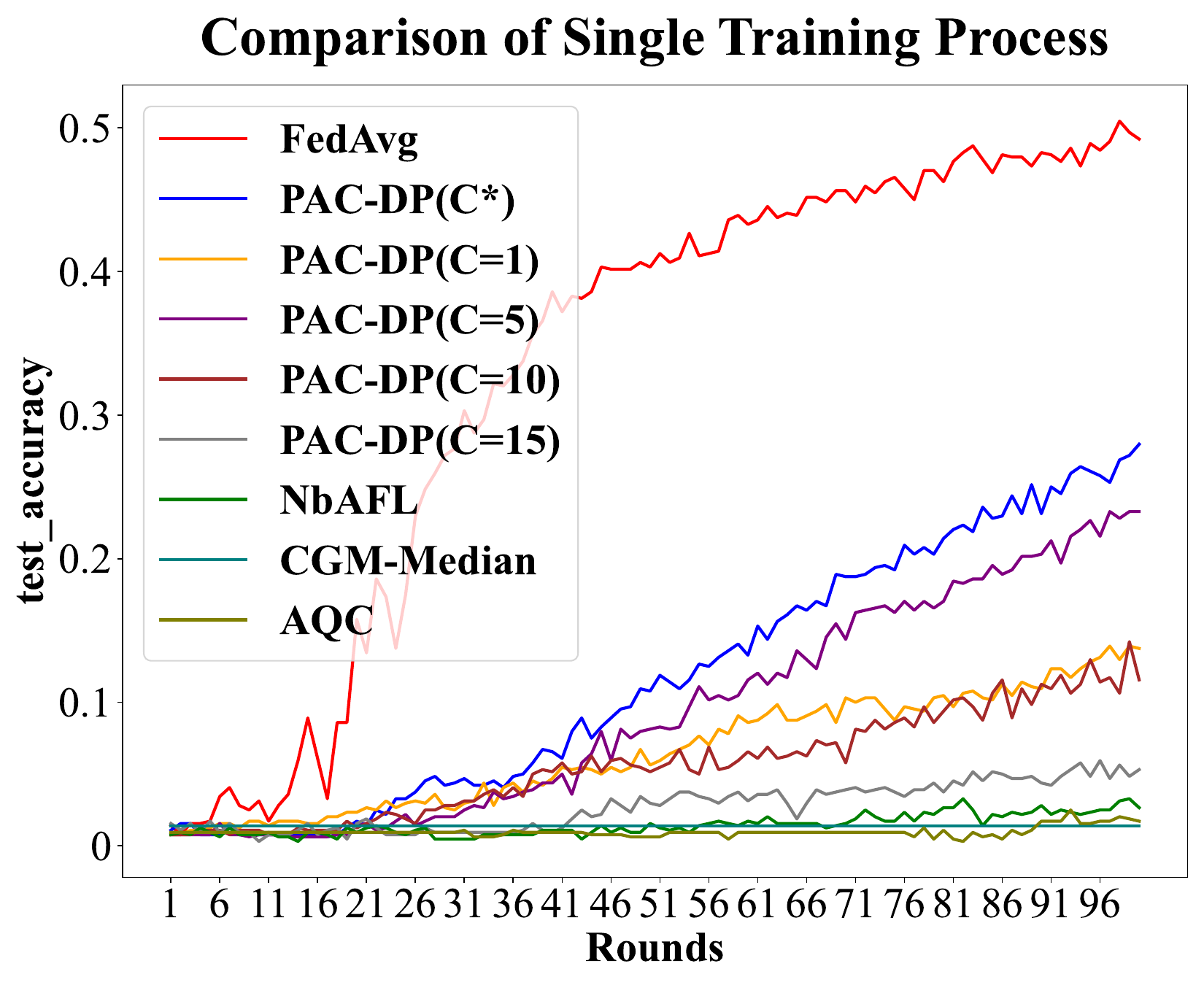}
    \caption{Testing accuracy on CIFAR-100 under Personalized Privacy Budgets. PAC-DP outperform all baselines.}
    \label{fig:cifar100_2}
\end{figure}
\textbf{Heterogeneous Privacy Settings (CIFAR-100).}
We evaluate PAC-DP under a realistic heterogeneous privacy setting on the CIFAR-100 dataset, where clients are assigned personalized privacy budgets $\varepsilon \in \{3.0, 5.0, 10.0\}$ with corresponding proportions $[0.6, 0.3, 0.1]$. PAC-DP attains a final Testing accuracy of \textbf{27.9\%}, surpassing the strongest fixed-threshold baseline (\textbf{23.2\%}) by \textbf{20.2\%}, as illustrated in Figure~\ref{fig:cifar100_2}. This confirms that tailoring clipping thresholds to individual privacy budgets effectively preserves model utility in heterogeneous environments.

\subsection{Robustness to Optimizer Choice}
\begin{figure}[htbp]
    \centering
    \includegraphics[width=0.4\textwidth]{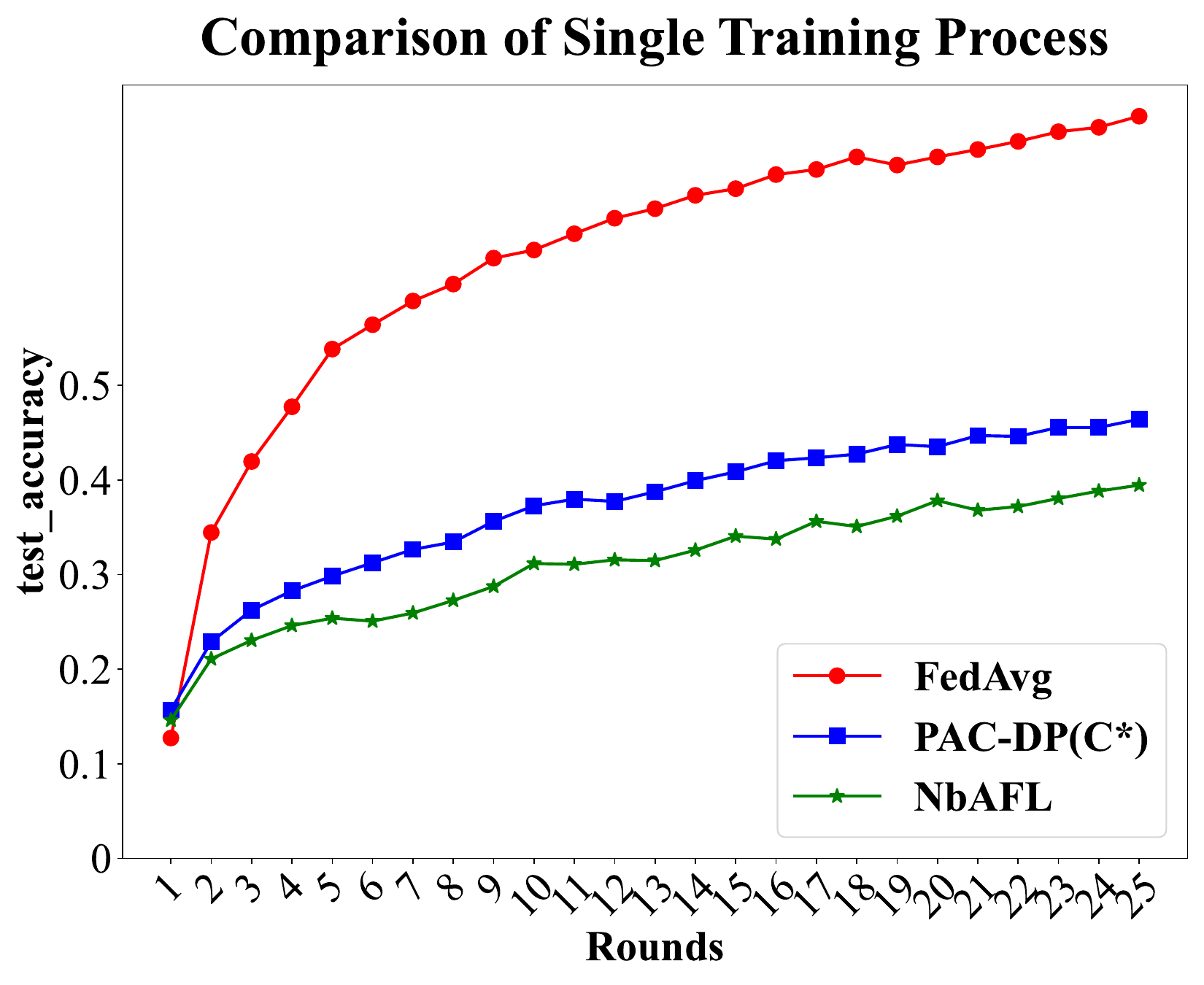}
    \caption{Testing accuracy on CIFAR-10 using the Adam Optimizer under a Fixed Clipping Threshold. PAC-DP achieve higher performance than NbAFL.}
    \label{fig:Adam_cifar10}
\end{figure}

To evaluate the robustness of PAC-DP to different optimization strategies, we replace the standard SGD optimizer with Adam—a popular adaptive optimizer known for its fast convergence and reduced sensitivity to learning rate tuning. Figure~\ref{fig:Adam_cifar10} shows the testing accuracy on CIFAR-10 when all methods use Adam with a fixed clipping threshold. Despite the change in optimizer dynamics, PAC-DP consistently outperforms NbAFL throughout training, demonstrating that our adaptive clipping mechanism remains effective under different gradient update rules.

\subsection{Communication and Computational Efficiency}

\begin{figure}[htbp]
    \centering
    \begin{subfigure}[b]{0.23\textwidth}
        \centering
        \includegraphics[width=\textwidth]{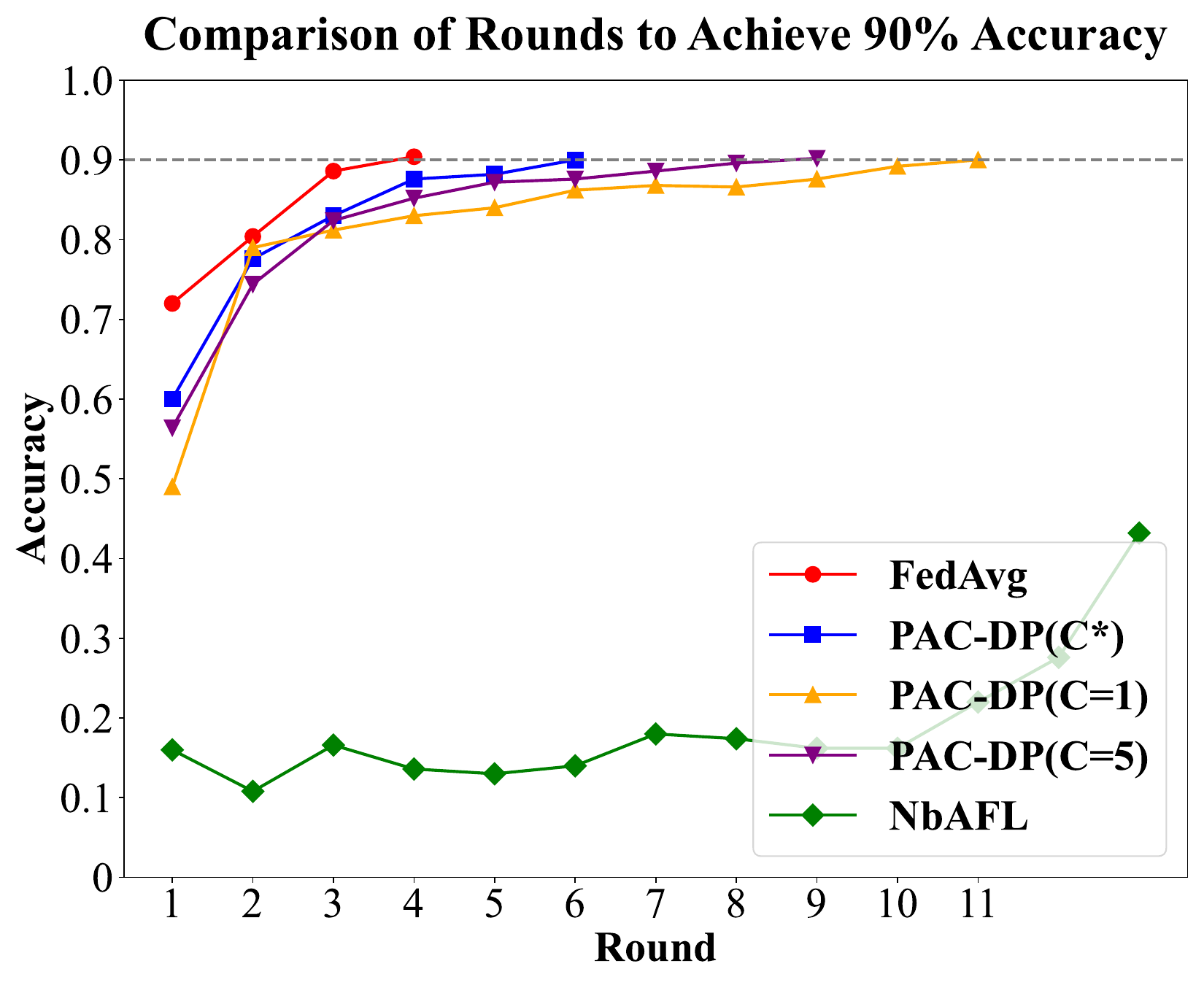}
        \caption{MNIST}
        \label{subfig:cmp_minst_round}
    \end{subfigure}
    \begin{subfigure}[b]{0.23\textwidth}
        \centering
        \includegraphics[width=\textwidth]{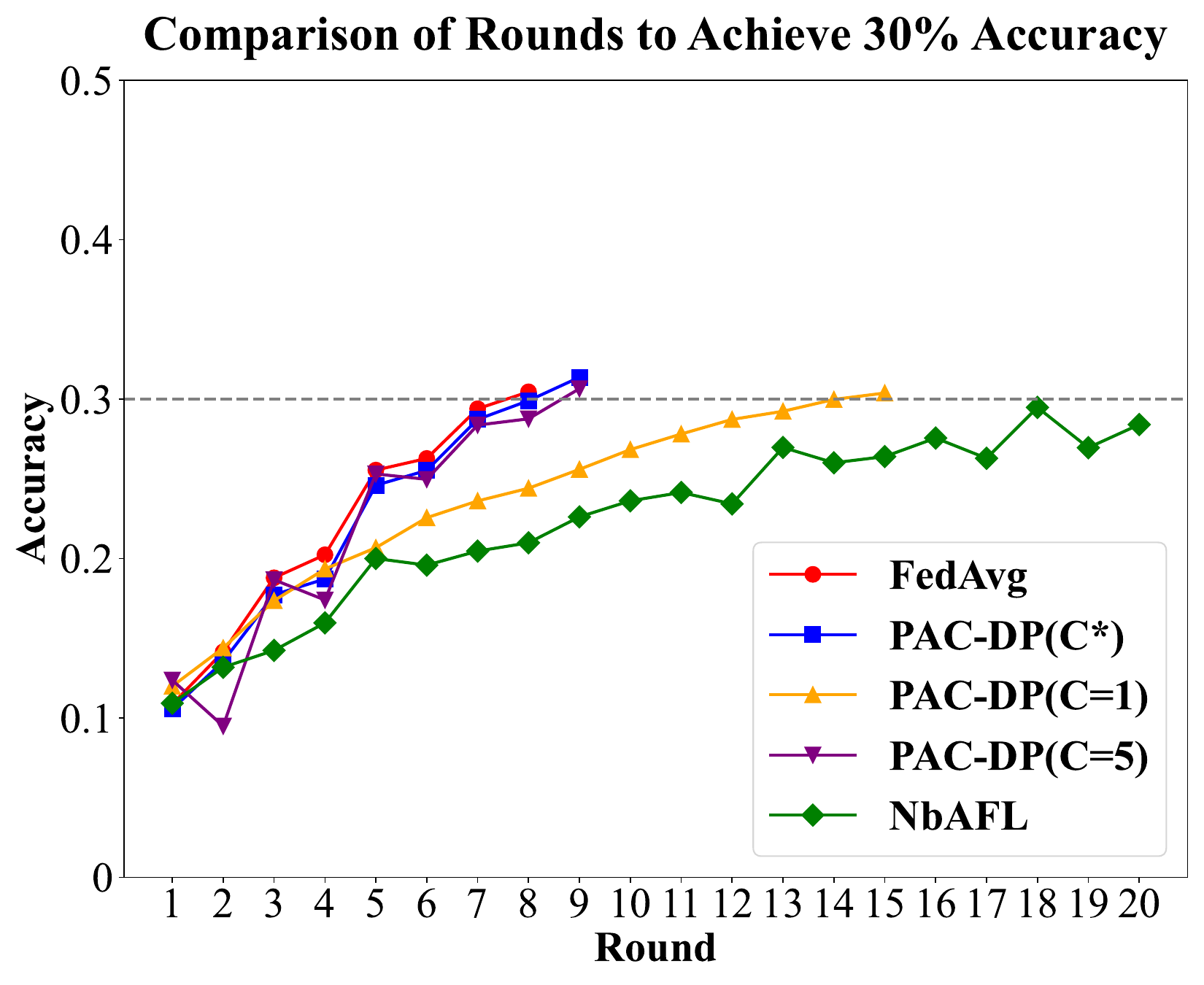}
        \caption{CIFAR-10}
        \label{subfig:cmp+cifar_round}
    \end{subfigure}
    \caption{Communication Efficiency Comparison on MNIST and CIFAR-10. PAC-DP significantly reduces the number of rounds needed to reach target accuracy.}
    \label{fig:cmp_round}
\end{figure}

To evaluate communication and convergence efficiency, we compare PAC-DP with fixed-threshold baselines and NbAFL on MNIST and CIFAR-10 under IID settings. The privacy budget is set to $\varepsilon = 0.1$.

As shown in Figure~\ref{fig:cmp_round}, PAC-DP reaches 90\% test accuracy on MNIST in only 6 rounds—saving 45.5\% and 33.3\% of communication rounds compared to $C=1$ (11 rounds) and $C=5$ (9 rounds), respectively. On CIFAR-10, PAC-DP achieves 30\% accuracy in 9 rounds, outperforming $C=1$ (15 rounds) and NbAFL (20 rounds), while matching the convergence speed of $C=5$.

The efficiency gains stem from PAC-DP’s two-phase optimization design: (1) a lightweight precomputation stage that predicts clipping thresholds based on a fitted $\varepsilon$-$C^*$ curve, avoiding manual tuning; and (2) a training stage with adaptive clipping that stabilizes convergence and mitigates oscillation caused by static thresholds.

Notably, while $C=5$ matches PAC-DP's convergence speed on CIFAR-10, it performs significantly worse on MNIST (18.2\% lower final accuracy), underscoring its lack of generalizability. In contrast, PAC-DP maintains consistently high accuracy across tasks by tailoring the clipping threshold to both privacy constraints and gradient dynamics.

\section{Conclusion and Future Work}
In this paper, we introduced PAC-DP, a Personalized Adaptive Clipping framework for federated learning with heterogeneous privacy preferences under record-level local differential privacy. 
Unlike traditional fixed-threshold methods that apply a one-size-fits-all clipping bound, PAC-DP employs a Simulation-CurveFitting approach leveraging a server-side public/synthetic proxy dataset. 
This approach learns a lightweight mapping between personalized privacy budgets $\varepsilon$ and clipping thresholds, which can be applied online with a round-wise schedule, enabling budget-conditioned calibration with minimal overhead.

Our theoretical analysis provides convergence guarantees in non-convex, convex, and strongly convex settings that are aligned with the per-example clipping and Gaussian perturbation mechanism and a reproducible privacy accounting procedure. 
These results support the role of budget-conditioned clipping in improving the privacy--utility trade-off relative to static clipping.

Comprehensive experiments on federated learning benchmarks (MNIST and CIFAR-10) empirically validate PAC-DP's advantages under matched privacy budgets. 
Across the evaluated settings, PAC-DP achieves substantial accuracy improvements (up to 26\%) and faster convergence (up to 45.5\%) compared to fixed-threshold baselines, while incurring minimal computational overhead during online training by amortizing costs through offline precomputation.
In conclusion, PAC-DP addresses an important limitation in personalized DP-FL by providing a principled and practically efficient approach to selecting clipping thresholds under heterogeneous privacy budgets.

While the results validate the benefits of budget-conditioned clipping, several avenues remain open for further exploration:
\begin{itemize}
    \item \textbf{Extension to More Complex Settings:} Extending PAC-DP to additional modalities (e.g., sequential or graph data) and larger-scale deployments with more pronounced system heterogeneity.
    \item \textbf{Dynamic Privacy Budgeting:} Integrating dynamic privacy budgeting schemes that adjust client-specific budgets over time, together with consistent privacy accounting.
    \item \textbf{Empirical Privacy Risk Evaluation:} Complementing formal DP guarantees with broader empirical privacy risk assessments under standard threat models and attack protocols.
    \item \textbf{Tighter Theoretical Bounds:} Deriving tighter convergence and privacy bounds by leveraging refined optimization analyses and tighter composition/accounting techniques.
    \item \textbf{Resource-Constrained Environments:} Optimizing PAC-DP for resource-constrained devices by balancing computation, communication, and privacy-utility performance.
\end{itemize}

\bibliographystyle{IEEEtranS}
\bibliography{IEEEabrv,myrefs}

\end{document}